\spnewtheorem{const}{Construction}{\bf}{\emph{\it}}
\newcommand{\Tr}{{\rm{Tr}}}
\newcommand{\tr}{{\rm{tr}}}
\newcommand{\rad}{\hbox{\rm{rad}}}
\newcommand{\qedd}{\hspace*{\fill}$\Box$\medskip}   
\begin{document}
\large


\title{New constructions of quaternary bent
functions}

\author{Baofeng Wu\and Dongdai Lin}

\titlerunning{New constructions of quaternary bent
functions}

\institute{Baofeng Wu \at
              State Key Laboratory of Information Security, Institute of Information Engineering, Chinese Academy of Sciences, Beijing 100093, China \\
\email{wubaofeng@iie.ac.cn}\and Dongdai Lin \at
              State Key Laboratory of Information Security, Institute of Information Engineering, Chinese Academy of Sciences, Beijing 100093, China \\
\email{ddlin@iie.ac.cn}}

\date{Received: date / Accepted: date}
\maketitle

\begin{abstract}
In this paper, a new construction of quaternary bent functions from
quaternary  quadratic forms over Galois rings of characteristic 4 is
proposed. Based on this construction, several new classes of
quaternary bent functions are obtained, and as a consequence,
several new classes of quadratic binary bent and semi-bent functions
in polynomial forms are derived. This work generalizes the recent
work of N. Li, X. Tang and T. Helleseth.

\keywords{Galois ring \and Teichm\"uller set\and Quaternary
quadratic form \and Quaternary bent function \and Bent and semi-bent
function}

\end{abstract}

\subclass{11T23\and11T71\and  13M10}


\section{Introduction}\label{secintro}

Let $\mathbb{F}_{2^n}$ be the finite field with $2^n$ elements,
where $n$ is a positive integer. Any map from $\mathbb{F}_{2^n}$ to
the integer residue ring $\mathbb{Z}_q$ is called an $n$-variable
$\mathbb{Z}_q$-valued Boolean function. Particularly,
$\mathbb{Z}_2$-valued Boolean functions are just the usual binary
Boolean functions and $\mathbb{Z}_4$-valued Boolean functions are
also  known as quaternary Boolean functions.

For an $n$-variable $\mathbb{Z}_q$-valued Boolean function $f$, its
Fourier transform at any $a\in\mathbb{F}_{2^n}$ is defined as
\[\hat{f}(a)=\sum_{x\in\mathbb{F}_{2^n}}\zeta_q^{f(x)}(-1)^{\tr_1^n(ax)},\]
where $\zeta_q$ is a $q$-th complex primitive root of unity and
``$\tr^n_1(\cdot)$" denotes the trace function from
$\mathbb{F}_{2^n}$ to $\mathbb{F}_{2}$, i.e.
$\tr(x)=\sum_{i=0}^{n-1}x^{2^i}$. $f$ is called a
$\mathbb{Z}_q$-valued (generalized) bent function if its Fourier
transform has a constant magnitude, or more precisely,
$|\hat{f}(a)|=2^{n/2}$ for any $a\in\mathbb{F}_{2^n}$. In the case
$q=2$, the Fourier transform of $f$ is always known as its Walsh
transform and $f$ is just the so-called bent function if it is a
$\mathbb{Z}_2$-valued bent function.

$\mathbb{Z}_q$-valued bent functions were introduced by Schmidt in
\cite{schmidt1} when seeking for good codes for multicode
code-division multiple access (MC-CDMA) systems. In fact, a
$\mathbb{Z}_q$-valued bent function corresponds to a code that can
reduce the peak-to-average power ratio (PAPR) \cite{jones} in such
systems to the lowest possible value (called a constant-amplitude
code). As a result, constructions of $\mathbb{Z}_q$-valued bent
functions will promise useful  objects in communication systems.

Generally speaking, $q$ is often chosen to be a power of 2 in
applications and the simplest case is $q=2$. But there are a main
drawback with binary bent functions that they only exist for even
number of variables. To avoid this drawback, a lot of attention has
been paid to quaternary bent functions as they exist for both even
and odd number of variables, and constructions of them have been
extensively studied. The main technique to construct quaternary bent
functions is to construct certain trace forms of Galois rings of
characteristic 4 and consider their limitations to the Teichm\"uller
sets since the Teichm\"uller sets are isomorphic to the finite
fields under certain multiplications and additions. For example, in
his Ph.D thesis \cite{schmidt2}, Schmidt considered quaternary
Boolean functions of the form
\begin{equation*}
Q(x)=\epsilon+\Tr_1^n(ax+2bx^3),~x\in\mathbb{T},
\end{equation*}
where $\epsilon\in\mathbb{Z}_4$, $a,~b\in GR(4,n)$ and
``$\Tr^n_1(\cdot)$" represents the trace function from $GR(4,n)$ to
$\mathbb{Z}_4$ (we denote the Galois ring with $4^n$ elements by
$GR(4,n)$ and
 its Teichm\"uller set by $\mathbb{T}$). He reduced the
conditions under which $Q$ was  a quaternary bent function to the
existence of roots of certain cubic equations over
$\mathbb{F}_{2^n}$. Very recently, Li et al. studied quaternary
Boolean functions of the form
\begin{equation*}
Q(x)=\Tr^n_1\left(x+2\sum_{i=1}^{\lfloor\frac{n-1}{2}\rfloor}c_ix^{1+2^{ki}}\right),~x\in\mathbb{T},
\end{equation*}
where $k$ is any positive integer and $c_i\in\mathbb{Z}_2$, $1\leq
i\leq \lfloor\frac{n-1}{2}\rfloor$ \cite{li}. It was proved that $Q$
is quaternary bent if and only if $\gcd(c(x^k),x^n-1)=1$ for
\[c(x)=1+\sum_{i=1}^{\lfloor\frac{n-1}{2}\rfloor}c_i(x^i+x^{n-i})\in\mathbb{F}_2[x].\]
Moreover, several classes of $c(x)$ satisfying such a condition were
constructed and thus several new classes of quaternary bent function
were obtained.

In this paper, we devote to generalizing Li et al.'s work. We
consider quaternary Boolean functions of the form
\begin{equation*}
Q(x)=\Tr^n_1\left(\alpha
x+2\sum_{i=1}^{\lfloor\frac{m-1}{2}\rfloor}c_i\beta
x^{1+2^{eki}}\right),~x\in\mathbb{T},
\end{equation*}
where $n=em$, $\alpha,~\beta\in\mathbb{T}$ and $c_i\in\mathbb{Z}_2$,
$1\leq i\leq \lfloor\frac{m-1}{2}\rfloor$. For some special choices
of $\alpha$ and $\beta$, we can derive the conditions for $Q$ to be
bent. Furthermore, we explicitly construct several classes of
coefficients sets $\{c_i\}$ meeting this condition, some of which
can be implied by Li et al.'s constructions.

On the other hand, by virtue of the connections between quaternary
bent functions  and binary bent and semi-bent functions deduced by
St\u{a}nic\u{a} et al. \cite{stanica}, we further derive new classes
of binary bent or semi-bent functions, respectively, of the form
\[f_Q(x)=p(\alpha x)+\sum_{i=1}^{\lfloor\frac{m-1}{2}\rfloor}c_i\tr_1^n(\beta x^{1+2^{eki}}),~x\in\mathbb{F}_{2^n},\]
according as $n$ is even or odd, respectively, where
\begin{equation}\label{eqpx}
p(x)=\left\{\begin{array}{ll}
\sum\limits_{i=1}^{\frac{n}{2}-1}\tr_1^n\left(x^{1+2^i}\right)+\tr_1^{n/2}\left(x^{1+2^{n/2}}\right)&~~\text{if}~n~\text{is~even},\\[.1cm]
\sum\limits_{i=1}^{\frac{n-1}{2}}\tr_1^n\left(x^{1+2^i}\right)&~~\text{if}~n~\text{is~odd}.
\end{array}
\right.
\end{equation}
 They are all quadratic bent or semi-bent functions in
polynomial  forms \cite{charpin}.

The rest of the paper is organized as follows. In Section
\ref{secpre}, we recall some necessary backgrounds on bent
functions, Galois rings and quadratic forms over them. In Section
\ref{secZ4bent}, our constructions of quaternary bent functions are
proposed, based on which some new classes of bent and semi-bent
functions are derived in Section \ref{secbent}. Concluding remarks
are given in Section \ref{secconclud}.

\large
\section{Preliminaries}\label{secpre}

\subsection{Quaternary bent functions and binary (semi-)bent functions}
Recall that an $n$-variable quaternary Boolean function $f$ is
called bent if $|\hat{f}(a)|=2^{n/2}$ for any $a\in\mathbb{F}_{2^n}$
where
\[\hat{f}(a)=\sum_{x\in\mathbb{F}_{2^n}}i^{f(x)}(-1)^{\tr_1^n(ax)},\]
$i=\sqrt{-1}$. An $n$-variable binary Boolean function $g$ is called
bent \cite{rothaus} if $|\hat{g}(a)|=2^{n/2}$ for any
$a\in\mathbb{F}_{2^n}$ where
\[\hat{g}(a)=\sum_{x\in\mathbb{F}_{2^n}}(-1)^{g(x)+\tr_1^n(ax)}.\]
Besides, $g$ is called semi-bent if
$|\hat{g}(a)|\in\{0,2^{\lfloor(n+2)/2\rfloor}\}$ \cite{chee}. It is
easy to derive that binary bent functions only exist for even $n$,
while quaternary bent functions exist for both even and odd $n$.

As every element of $\mathbb{Z}_4$ has a 2-adic expansion, there
exist two binary Boolean functions $f_0$ and $f_1$ such that
$f=f_0+2f_1$ for any quaternary Boolean function $f$. Under this
representation, connections between quaternary bent functions and
binary bent and semi-bent functions are obtained by St\u{a}nic\u{a}
et al. recently.

\begin{theorem}[\cite{stanica}]\label{connection}
Let $f(x)$ be an $n$-variable quaternary Boolean function and
$f(x)=f_0(x)+2f_1(x)$ be its 2-adic expansion. Denote
$\phi(f)(y,z)=f_0(y)z+f_1(y)$ with $y\in\mathbb{F}_{2^n}$ and
$z\in\mathbb{F}_{2}$, which can be viewed as an $(n+1)$-variable
Boolean function over $\mathbb{F}_{2^n}\times\mathbb{F}_{2}$. Then

\noindent (1)
$|\hat{f}(a)|^2=\left(|\hat{f_0}(a)|^2+|\hat{f_1}(a)|^2\right)/2$
for any $a\in\mathbb{F}_{2^n}$;

\noindent (2) $f(x)$ is quaternary bent if and only if $f_1(x)$ and
$f_0(x)+f_1(x)$ are both binary bent or semi-bent, respectively,
according as $n$ is even or odd, respectively;

\noindent (3) $\phi(f)(y,z)$ is binary bent or semi-bent,
respectively, according as $n$ is odd or even, respectively, if
$f(x)$ is quaternary bent.
\end{theorem}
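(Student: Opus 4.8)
The plan is to derive all three parts from a single expansion of $\hat{f}$. Writing $f=f_0+2f_1$ with $f_0,f_1\colon\mathbb{F}_{2^n}\to\mathbb{F}_2$ one has $i^{f(x)}=i^{f_0(x)}(-1)^{f_1(x)}$, and substituting $i^{t}=\tfrac{1+i}{2}+\tfrac{1-i}{2}(-1)^{t}$ for $t\in\{0,1\}$ and expanding gives
\[
i^{f(x)}=\tfrac{1+i}{2}(-1)^{f_1(x)}+\tfrac{1-i}{2}(-1)^{f_0(x)+f_1(x)}.
\]
Multiplying by $(-1)^{\tr_1^n(ax)}$ and summing over $x$ yields
\[
\hat{f}(a)=\tfrac{1+i}{2}\,\widehat{f_1}(a)+\tfrac{1-i}{2}\,\widehat{f_0+f_1}(a),
\]
where $\widehat{f_1}$ and $\widehat{f_0+f_1}$ are the (real, integer-valued) Walsh transforms of the binary functions $f_1$ and $f_0+f_1$. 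Setting $r=\widehat{f_1}(a)$ and $s=\widehat{f_0+f_1}(a)$ we get $\hat{f}(a)=\tfrac{r+s}{2}+i\,\tfrac{r-s}{2}$, so $|\hat{f}(a)|^2=\tfrac12(r^2+s^2)$, which is part (1); moreover $\hat{f}(a)=0$ exactly when $r=s=0$, since $1+i$ and $1-i$ are $\mathbb{R}$-linearly independent.

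For the ``only if'' direction of (2), assume $f$ is quaternary bent, so $r^2+s^2=2^{n+1}$ for every $a$. The one non-routine ingredient is the elementary fact that for integers $r,s$ the equation $r^2+s^2=2^m$ forces either $m$ even with $\{|r|,|s|\}=\{2^{m/2},0\}$, or $m$ odd with $|r|=|s|=2^{(m-1)/2}$; this is cleanest through unique factorization in $\mathbb{Z}[i]$, where the only elements of norm $2^m$ are the associates of $(1+i)^m$ and $(1+i)^2=2i$. Applying it with $m=n+1$: when $n$ is even we obtain $|\widehat{f_1}(a)|=|\widehat{f_0+f_1}(a)|=2^{n/2}$ for all $a$, so both functions are bent; when $n$ is odd we obtain that at each $a$ exactly one of $\widehat{f_1}(a),\widehat{f_0+f_1}(a)$ is $0$ and the other is $\pm2^{(n+1)/2}$, so the Walsh spectrum of each lies in $\{0,\pm2^{\lfloor(n+2)/2\rfloor}\}$ and (being nonzero somewhere by Parseval) each is semi-bent. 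The converse is immediate for $n$ even, since then $r^2+s^2=2^n+2^n=2^{n+1}$; but for $n$ odd ``both semi-bent'' by itself is not sufficient --- take $f_0=0$ and $f_1$ semi-bent, so $f=2f_1$ and $\hat{f}=\widehat{f_1}$ vanishes on half of $\mathbb{F}_{2^n}$ --- and one must also know that the Walsh supports of $f_1$ and $f_0+f_1$ partition $\mathbb{F}_{2^n}$. I expect this to be the main obstacle; I would handle it by establishing parts (1) and (3) first and then, for odd $n$, reading off the equivalence ``$f$ quaternary bent $\iff$ $\phi(f)$ bent'' from the identity below, which encodes exactly this partition property, the semi-bent decomposition then being a corollary.

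For part (3) I would compute the Walsh transform of $\phi(f)(y,z)=f_0(y)z+f_1(y)$ on $\mathbb{F}_{2^n}\times\mathbb{F}_2$ directly. Summing over $z\in\mathbb{F}_2$ first, the factor $\sum_{z}(-1)^{(f_0(y)+b)z}$ equals $2$ if $f_0(y)=b$ and $0$ otherwise, i.e.\ it equals $1+(-1)^{f_0(y)+b}$, which collapses the double sum to
\[
\widehat{\phi(f)}(a,b)=\widehat{f_1}(a)+(-1)^b\,\widehat{f_0+f_1}(a),
\]
so $|\widehat{\phi(f)}(a,0)|^2+|\widehat{\phi(f)}(a,1)|^2=2\bigl(\widehat{f_1}(a)^2+\widehat{f_0+f_1}(a)^2\bigr)=4\,|\hat{f}(a)|^2$ by part (1). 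If $f$ is quaternary bent, substitute the magnitudes found in the ``only if'' part of (2): for $n$ odd one summand is $0$ and the other $\pm2^{(n+1)/2}$, whence $\widehat{\phi(f)}(a,b)=\pm2^{(n+1)/2}$ for both $b$, so the $(n+1)$-variable function $\phi(f)$ (with $n+1$ even) is bent; for $n$ even both summands are $\pm2^{n/2}$, whence $\widehat{\phi(f)}(a,b)\in\{0,\pm2^{n/2+1}\}$ with $2^{n/2+1}=2^{\lfloor((n+1)+2)/2\rfloor}$, so $\phi(f)$ is semi-bent. Finally, reading $|\hat{f}(a)|^2=\tfrac14\bigl(|\widehat{\phi(f)}(a,0)|^2+|\widehat{\phi(f)}(a,1)|^2\bigr)$ backwards shows that $\phi(f)$ bent forces $|\hat{f}(a)|^2=2^n$, the input used above to close the odd-$n$ case of (2).
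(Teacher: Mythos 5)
The paper does not prove this theorem---it is quoted from \cite{stanica} without argument---so there is no internal proof to compare against. Judged on its own, your proof is correct and is essentially the standard argument from the cited source: the decomposition $i^{f(x)}=\frac{1+i}{2}(-1)^{f_1(x)}+\frac{1-i}{2}(-1)^{f_0(x)+f_1(x)}$ gives $\hat f(a)=\frac{r+s}{2}+i\,\frac{r-s}{2}$ with $r=\widehat{f_1}(a)$, $s=\widehat{f_0+f_1}(a)$, which yields part (1) immediately; the classification of integer solutions of $r^2+s^2=2^{n+1}$ via unique factorization in $\mathbb{Z}[i]$ (where $1+i$ is the unique prime above $2$) correctly gives the ``only if'' half of (2) in both parities; and the computation $\widehat{\phi(f)}(a,b)=\widehat{f_1}(a)+(-1)^b\,\widehat{f_0+f_1}(a)$, together with the identity $4|\hat f(a)|^2=|\widehat{\phi(f)}(a,0)|^2+|\widehat{\phi(f)}(a,1)|^2$, settles part (3) completely.

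The substantive point you raise is not a gap in your argument but a genuine flaw in the statement as printed: the ``if'' direction of (2) for odd $n$ is false, and your counterexample ($f_0=0$, $f_1$ semi-bent, so $f=2f_1$ and $\hat f=\widehat{f_1}$ vanishes somewhere by Parseval) is valid. Semi-bentness of $f_1$ and $f_0+f_1$ only constrains each spectrum separately to $\{0,\pm2^{(n+1)/2}\}$; quaternary bentness additionally requires that at every $a$ exactly one of the two values vanishes, i.e.\ that the two Walsh supports are complementary. Your repair---proving for odd $n$ the clean equivalence ``$f$ quaternary bent $\iff$ $\phi(f)$ bent'' from the displayed identity, and recovering the semi-bent decomposition as a corollary of the ``only if'' direction---is the right fix and matches how the odd case must be phrased to be true. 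Note that the implication actually exploited later in the paper (quaternary bentness of $Q$ forces $f_1=f_Q$ to be bent or semi-bent, in Theorem~\ref{derivethm}) is the sound direction, which your proof fully establishes.
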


\begin{remark}
The binary Boolean function $\phi(f)$ in Theorem \ref{connection} is
often called the Gray image of the quaternary Boolean function $f$.
\end{remark}

\subsection{Galois ring of characteristic 4}

A Galois ring is a Galois extension of an integer residue ring with
a prime power moduli, and this prime power is called the
characteristic of it. Since we will only focus on the quaternary
case, we just recall some basic results of Galois rings of
characteristic 4. For their proofs, we refer to \cite{mcdonald}.

The Galois ring $GR(4,n)$ with $4^n$ elements is an $n$-th Galois
extension of $\mathbb{Z}_4$. In fact, the Galois theory of ring
extensions are much like that of field extensions. More precisely,
to obtain $GR(4,n)$, we can add (formal) roots of a monic basic
irreducible polynomial of degree $n$ over $\mathbb{Z}_4$ to
$\mathbb{Z}_4$. Here a basic irreducible polynomial over
$\mathbb{Z}_4$ is a polynomial whose modulo 2 reduction is a
primitive polynomial over $\mathbb{F}_2$. Assume $\xi$ is a root of
this monic basic irreducible polynomial of order $2^n-1$ (i.e.
$\xi^{2^n-1}=1$), then $GR(4,n)\cong \mathbb{Z}_4[\xi]$. The set
$\mathbb{T}=\{0,1,\xi,\ldots,\xi^{2^n-2}\}$ is called the
Teichm\"uller set of $GR(4,n)$. It is just the roots set of the
equation $x^{2^n}-x=0$ in $GR(4,n)$. It is obvious that
$\mathbb{T}^*=\mathbb{T}\backslash\{0\}$ forms a cyclic group under
the multiplication of $GR(4,n)$, which is isomorphic to
$\mathbb{F}_{2^n}^*$, the multiplicative group of the finite field
$\mathbb{F}_{2^n}$. However, $\mathbb{T}$ does not form a group
under the addition of $GR(4,n)$ since it is not closed under this
operation. To make $\mathbb{T}$ into an additive group, we can
introduce a new operation ``$\oplus$" defined by
\[x\oplus y=x+y+2\sqrt{xy}\]
for any $x,~y\in\mathbb{T}$ (here $\sqrt{xy}$ denotes
$(xy)^{2^{n-1}}$). It can be proved that under the multiplication of
$GR(4,n)$ and the addition ``$\oplus$", $\mathbb{T}$ forms a field
which is isomorphic to the finite field $\mathbb{F}_{2^n}$. Besides,
if we denote by $\mu$ the modulo 2 reduction map, we have
$\mu(\mathbb{T})=\mathbb{F}_{2^n}$.

Every element  $z\in GR(4,n)$ can be uniquely represented as
$z=x+2y$ where $x,~y\in\mathbb{T}$. Under this representation, the
trace function from $GR(4,n)$ to $\mathbb{Z}_4$ is defined as
\[\Tr_1^n(z)=\sum_{i=0}^{n-1}(x^{2^i}+y^{2^i}).\]
The trace function over $GR(4,n)$ and that over $\mathbb{F}_{2^n}$
are related via the map $\mu$ as follows:

\noindent(1) $\mu\left(\Tr_1^n(z)\right)=\tr_1^n(\mu(z))$;\\
(2) $\Tr_1^n(2z)=2\Tr_1^n(z)=2\tr_1^n(\mu(z))$.

\subsection{Quaternary quadratic form}

A quaternary quadratic form $F$ is a mapping from $\mathbb{T}$ to
$\mathbb{Z}_4$ satisfying $F(0)=0$ and \[F(x\oplus
y)=F(x)+F(y)+2B(x,y),\] where $B$ is a symmetric bilinear form on
$\mathbb{T}$, i.e. $B$ can induce a map $\mathbb{T}\times
\mathbb{T}\longrightarrow \mathbb{Z}_4$ such that $B(x,y)=B(y,x)$
and $B(x\oplus y,z)=B(x,z)+B(y,z)$. $B$ is often called the
associate bilinear form of $F$. The rank of $F$ is defined as the
codimension of the radical space of $B$, $\rad(B)$, over
$\mathbb{F}_2$, where
\[\rad(B)=\{x\in\mathbb{T}\mid B(x,y)=0~\text{for~any}~y\in\mathbb{T}\}\]
(it is easy to see that $\rad(B)$ is a vector space over $\mathbb{
F}_2$).

Clearly a quaternary quadratic form can be viewed as a quaternary
Boolean function. Direct computations show that its Fourier
transform can be completely determined by its rank. Details of
computing exponential sums over Galois rings can be found in
\cite{schmidt3,schmidt4}. We just involve a main result here.

\begin{theorem}[\cite{schmidt4}]\label{gbentcrit}
A quaternary quadratic form $F$ is quaternary bent if and only if it
is of full rank, or equivalently, $\hbox{\rm{rad}}(B)=\{0\}$ where
$B$ is the associate bilinear form of $F$.
\end{theorem}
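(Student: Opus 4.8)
The plan is to compute the Fourier transform of $F$ explicitly and read off its magnitude from the rank of the associate bilinear form $B$. First I would fix a complementary decomposition $\mathbb{T}=V\oplus\rad(B)$ as $\mathbb{F}_2$-vector spaces (using the field structure on $\mathbb{T}$), so that every $x\in\mathbb{T}$ is written uniquely as $x=v\oplus r$ with $v\in V$ and $r\in\rad(B)$. Since $F(x\oplus y)=F(x)+F(y)+2B(x,y)$ and $B(v,r)=0$ for all $r\in\rad(B)$, the restriction of $F$ to $\rad(B)$ is additive in the sense $F(r_1\oplus r_2)=F(r_1)+F(r_2)$; one then shows $F|_{\rad(B)}$ is in fact $\mathbb{Z}_2$-valued and linear, i.e. $F(r)=2\ell(r)$ for a $\mathbb{Z}_2$-linear map $\ell$ on $\rad(B)$ (this uses $2\cdot 2=0$ in $\mathbb{Z}_4$ to kill cross terms and the relation $F(r\oplus r)=F(0)=0$, forcing $2F(r)=0$). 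Hence $\zeta_4^{F(x)}=\zeta_4^{F(v)}\zeta_4^{2B(v,r)}(-1)^{\ell(r)}=\zeta_4^{F(v)}(-1)^{\ell(r)}$, and the Fourier sum factors.

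Next, for any $a\in\mathbb{F}_{2^n}$, writing the character $(-1)^{\tr_1^n(ax)}$ via $\mu$ and the fact that $x\mapsto\mu(x)$ turns $(\mathbb{T},\oplus,\cdot)$ into $\mathbb{F}_{2^n}$, the sum $\hat F(a)$ splits as a product of a sum over $v\in V$ and a sum over $r\in\rad(B)$. The inner sum $\sum_{r\in\rad(B)}(-1)^{\ell(r)+\tr_1^n(a r)}$ is $\pm|\rad(B)|$ or $0$ depending on whether the linear functional $r\mapsto \ell(r)+\tr_1^n(ar)$ vanishes on $\rad(B)$. So if $\rad(B)\neq\{0\}$, then for those $a$ making this functional nontrivial on $\rad(B)$ — and there is at least one such $a$ — we get $\hat F(a)=0\neq 2^{n/2}$, so $F$ is not bent. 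This gives the "only if" direction (in contrapositive form): bentness forces $\rad(B)=\{0\}$.

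For the converse, assume $\rad(B)=\{0\}$, so $B$ is a nondegenerate symmetric bilinear form on $\mathbb{F}_{2^n}$. The hard part is evaluating $|\hat F(a)|$ for a genuinely nondegenerate quaternary quadratic form; this is the quaternary analogue of the classical Gauss-sum computation for $\mathbb{F}_2$-valued quadratic forms, and is exactly the exponential-sum machinery over Galois rings referenced in \cite{schmidt3,schmidt4}. The standard route is to complete the square: using nondegeneracy, for each fixed $a$ one finds $x_a\in\mathbb{T}$ with $B(x_a,y)$ matching the linear part so that the substitution $x\mapsto x\oplus x_a$ absorbs $a$ into a constant, reducing $|\hat F(a)|$ to $|\hat F(0)|=|\sum_{x\in\mathbb{T}}\zeta_4^{F(x)}|$, which is independent of $a$. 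One then computes this single Gauss sum — e.g. by diagonalizing $F$ over $\mathbb{T}$ into a sum of rank-one and rank-two pieces of the forms $2x^2$ and $x^2\oplus$(a hyperbolic pair) — and checks that its modulus is exactly $2^{n/2}$. I expect this diagonalization/Gauss-sum evaluation to be the main obstacle, since it requires care about which normal forms occur over $\mathbb{Z}_4$ (as opposed to over $\mathbb{F}_2$) and about the arithmetic of $\sqrt{\cdot}=(\cdot)^{2^{n-1}}$ inside the $\oplus$ operation; but since the statement attributes it to \cite{schmidt4}, I would cite that computation rather than reproduce it, and simply assemble the two directions into the stated equivalence.
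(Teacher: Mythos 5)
The paper does not prove this statement: it is quoted verbatim from \cite{schmidt4}, with only the remark that ``details of computing exponential sums over Galois rings can be found in \cite{schmidt3,schmidt4}.'' So there is no in-paper argument to compare against, and your sketch has to stand on its own. It essentially does. Your ``only if'' direction is correct and complete: on $\rad(B)$ the identity $F(r\oplus r)=F(0)=0$ forces $2F(r)=0$, so $F|_{\rad(B)}=2\ell$ with $\ell$ linear, the Fourier sum factors through the decomposition $\mathbb{T}=V\oplus\rad(B)$, and the inner sum $\sum_{r\in\rad(B)}(-1)^{\ell(r)+\tr_1^n(a\mu(r))}$ equals $|\rad(B)|$ or $0$ (not $\pm|\rad(B)|$ --- it is a sum of a linear character, so no sign occurs); surjectivity of $a\mapsto\tr_1^n(a\mu(\cdot))|_{\rad(B)}$ onto the dual of $\rad(B)$ guarantees an $a$ with $\hat F(a)=0$ whenever $\rad(B)\neq\{0\}$.

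The one substantive comment concerns the converse. You correctly identify completing the square --- choosing $x_a$ with $B(x_a,y)=\tr_1^n(a\mu(y))$ for all $y$, which exists by nondegeneracy --- as giving $|\hat F(a)|=|\hat F(0)|$ for every $a$. But the ``main obstacle'' you then flag (diagonalizing $F$ over $\mathbb{Z}_4$ and evaluating the Gauss sum $\hat F(0)$ explicitly) is unnecessary for this theorem: once $|\hat F(a)|$ is known to be constant in $a$, Parseval's identity
\begin{equation*}
\sum_{a\in\mathbb{F}_{2^n}}|\hat F(a)|^2=2^n\sum_{x}|i^{F(x)}|^2=4^n
\end{equation*}
forces $|\hat F(a)|^2=2^n$ for every $a$, i.e.\ $|\hat F(a)|=2^{n/2}$. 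The explicit normal-form computation in \cite{schmidt4} is only needed if one wants the actual value (argument included) of $\hat F(0)$, not its modulus. With this substitution your argument is self-contained and elementary, and arguably cleaner than deferring to the cited exponential-sum machinery.
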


\section{{New constructions of quaternary bent functions}}\label{secZ4bent}

In the rest part of the paper, we assume $n=em$, $e\geq1$ and fix a
positive integer $k$. Denote by $\mathbb{T}_e^*$ the set of nonzero
elements in $\mathbb{T}$ satisfying $x^{2^e}=x$, where $\mathbb{T}$
is the Teichm\"uller set of $GR(4,n)$. In the following, we study
bentness of a special quaternary quadratic form in $n$ variables.

\begin{theorem}\label{mainthm}
Assume $\alpha,~\beta\in\mathbb{T}_e^*$ and $\beta=\alpha^2$. Let
\begin{equation}\label{wuconst}
Q(x)=\Tr^n_1\left(\alpha
x+2\sum_{i=1}^{\lfloor\frac{m-1}{2}\rfloor}c_i\beta
x^{1+2^{eki}}\right),~x\in\mathbb{T},
\end{equation}
where  $c_i\in\mathbb{Z}_2$, $1\leq i\leq
\lfloor\frac{m-1}{2}\rfloor$. Then $Q(x)$ is  a quaternary bent
function if and only if $\gcd(c(x^k),x^m-1)=1$ where
\begin{equation}\label{cx}
c(x)=1+\sum_{i=1}^{\lfloor\frac{m-1}{2}\rfloor}c_i(x^i+x^{m-i})\in\mathbb{F}_2[x].
\end{equation}
\end{theorem}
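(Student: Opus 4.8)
The plan is to recognize that $Q(x)$ in \eqref{wuconst} is a quaternary quadratic form in the sense of the definition preceding Theorem \ref{gbentcrit}, and then to apply that theorem: $Q$ is quaternary bent if and only if the radical of its associated symmetric bilinear form $B$ is trivial. So the first step is to compute $B$. Using $x \oplus y = x + y + 2\sqrt{xy}$ and the additivity/Frobenius properties of $\Tr_1^n$, I would expand $Q(x\oplus y) - Q(x) - Q(y)$. The linear term $\Tr_1^n(\alpha x)$ contributes $2\Tr_1^n(\alpha\sqrt{xy})$, and each term $2c_i\beta x^{1+2^{eki}}$ contributes (modulo $4$, so the inner $2$ kills the $2\sqrt{\cdot}$ cross terms coming from $\oplus$ inside the powers) a bilinear piece $2c_i\Tr_1^n\!\big(\beta(x y^{2^{eki}} + x^{2^{eki}} y)\big)$. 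Collecting everything, $B(x,y) = 2\,\tr_1^n\!\big(\text{(something linear in }x\text{ over }\mathbb{T})\cdot\sqrt{\cdot}\big)$ reduces, via $\mu$, to a bilinear form valued in $2\mathbb{Z}_4 \cong \mathbb{F}_2$; concretely $B(x,y) = 2\,\tr_1^n\!\big(y \cdot L(x)\big)$ after moving everything onto $y$, where $L:\mathbb{F}_{2^n}\to\mathbb{F}_{2^n}$ is the $\mathbb{F}_2$-linearized polynomial $L(x) = \alpha x^{2^{n-1}} \text{-type term} + \sum_i c_i\big(\beta^{2^{-1}}x^{2^{-1}} \text{-terms}\big)$; the exact shape is routine but must be written carefully, using $\beta = \alpha^2$ and $\alpha^{2^e}=\alpha$ to simplify.

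Once $B(x,y) = 2\,\tr_1^n(y L(x))$ with $L$ an additive (linearized) map, $\rad(B) = \ker L$, so by Theorem \ref{gbentcrit} bentness is equivalent to $L$ being a bijection on $\mathbb{F}_{2^n}$. The second step is therefore to identify $\ker L$ with the roots of a polynomial and to exploit the condition $\alpha,\beta\in\mathbb{T}_e^*$. The point of $\alpha^{2^e}=\alpha$ is that $\alpha$ lies in the subfield $\mathbb{F}_{2^e}\subseteq\mathbb{F}_{2^n}$ (after $\mu$), so multiplication by $\alpha$ commutes with $x\mapsto x^{2^e}$ and with $x\mapsto x^{2^{eki}}$. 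Using this, after composing with an appropriate power of Frobenius and multiplying through, I expect $\ker L$ to coincide with $\ker \widetilde L$ where $\widetilde L(x) = x + \sum_{i=1}^{\lfloor (m-1)/2\rfloor} c_i\big(x^{2^{eki}} + x^{2^{-eki}}\big)$ (indices mod $n$), i.e. $\widetilde L$ is, up to the substitution $x\mapsto x^{2^{ek}}$ and symmetry, the linearized polynomial attached to $c(x^k)$: restricting to $\mathbb{F}_{2^n}$ and matching $x^{2^{ekj}}\leftrightarrow x^j \bmod (x^m-1)$ via the standard correspondence between $2^e$-linearized polynomials over $\mathbb{F}_{2^n}$ and elements of $\mathbb{F}_{2^e}[x]/(x^m-1)$. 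Under that correspondence $\widetilde L$ is bijective on $\mathbb{F}_{2^n}$ iff its symbol $c(x^k)$ is a unit in $\mathbb{F}_{2}[x]/(x^m-1)$, which is exactly $\gcd(c(x^k), x^m-1) = 1$.

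The main obstacle I anticipate is the bookkeeping in this last identification: one must justify passing from the additive map $L$ on $\mathbb{T}$ (equivalently on $\mathbb{F}_{2^n}$) to the polynomial $c(x^k) \in \mathbb{F}_2[x]$ modulo $x^m-1$, which requires (i) handling the negative exponents $2^{-eki}$ carefully (they are literally $2^{ek(n-i)}$-type exponents, and one uses $n=em$ to fold them back), (ii) confirming that the relevant linearized polynomials have $\mathbb{F}_{2^e}$-coefficients so that the $q$-polynomial ring with $q=2^e$ applies and the module $\mathbb{F}_{2^e}[x]/(x^m-1)$ (not $\mathbb{F}_2[x]/(x^n-1)$) is the correct object, and (iii) checking the degenerate middle term when $m$ is even — but note that $c(x)$ in \eqref{cx} only runs $i$ up to $\lfloor (m-1)/2\rfloor$, matching the sum in $Q$, so that symmetric $x^i + x^{m-i}$ pattern drops out cleanly and no exceptional middle coefficient arises. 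I would also remark that the case $e=1$ recovers Li et al.'s theorem in \cite{li}, providing a consistency check. The rest — verifying $Q(0)=0$ and that $B$ is genuinely symmetric bilinear — is immediate from the construction.
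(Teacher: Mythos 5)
Your proposal follows essentially the same route as the paper: expand $Q(x\oplus y)-Q(x)-Q(y)$ to get the associated bilinear form in the shape $2\tr_1^n\bigl(\bar{\beta}\bar{y}\,L(\bar{x})\bigr)$ (using $\beta=\alpha^2$ and $\beta^{2^e}=\beta$ to merge the $\sqrt{xy}$ term and to flip $\bar{x}\bar{y}^{2^{eki}}$ onto $\bar{x}^{2^{ek(m-i)}}\bar{y}$), invoke Theorem \ref{gbentcrit} to reduce bentness to $L$ having only the root $0$, and convert that to $\gcd(c(x^k),x^m-1)=1$. The only cosmetic difference is the last step, where you appeal directly to the $2^e$-linearized-polynomial correspondence modulo $x^m-1$, whereas the paper uses the $\mathbb{F}_2$-version (Lemma \ref{lemLP}, modulus $x^n-1$) followed by the substitution $x\mapsto x^e$ (Lemma \ref{lemgcd}); both are valid.
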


It is obvious that when $e=1$ and $\alpha=1$, Theorem \ref{mainthm}
coincides with  \cite[Theorem~1]{li}. Hence the construction of
quaternary bent functions in Theorem \ref{mainthm} can be viewed as
a generalization of that in \cite{li} as long as $n$ is not a prime.

To prove Theorem \ref{mainthm}, we need the following lemmas.

\begin{lemma}[\cite{lidl}]\label{lemLP}
Let $L(x)=\sum_{i=0}^{n-1}c_ix^{2^i}$ and
$l(x)=\sum_{i=0}^{n-1}c_ix^{i}$  both be polynomials over
$\mathbb{F}_2$. Then $L(x)$ has only one root  in $\mathbb{F}_{2^n}$
if and only if $\gcd(l(x),x^n-1)=1$.
\end{lemma}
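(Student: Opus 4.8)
The plan is to stop reading $L$ as a polynomial and instead read it as an $\mathbb{F}_2$-linear operator on $\mathbb{F}_{2^n}$, and then to identify the subring it generates with $\mathbb{F}_2[x]/(x^n-1)$. First I would record the elementary reductions. Each monomial $x\mapsto x^{2^i}$ is additive on the characteristic-$2$ field $\mathbb{F}_{2^n}$ and fixes the scalars $\mathbb{F}_2$, so $x\mapsto L(x)$ is $\mathbb{F}_2$-linear, and $L(0)=0$ always. Hence the set of roots of $L$ in $\mathbb{F}_{2^n}$ is exactly $\ker L$, an $\mathbb{F}_2$-subspace, and the phrase ``$L$ has only one root'' is equivalent to $\ker L=\{0\}$, i.e.\ to $L$ being injective, i.e.\ (the domain being finite) to $L$ being a bijection of $\mathbb{F}_{2^n}$.

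Next I would introduce the Frobenius operator $\phi\colon\mathbb{F}_{2^n}\to\mathbb{F}_{2^n}$, $\phi(x)=x^2$. Since $x^{2^i}=\phi^{\,i}(x)$ and $\phi^{\,n}=\mathrm{id}$ on $\mathbb{F}_{2^n}$, the operator induced by $L$ is precisely $l(\phi)=\sum_{i=0}^{n-1}c_i\phi^{\,i}$. Thus evaluation at $\phi$ gives a ring homomorphism $\psi\colon\mathbb{F}_2[x]/(x^n-1)\to\mathrm{End}_{\mathbb{F}_2}(\mathbb{F}_{2^n})$ that sends the class of $x$ to $\phi$ (well defined because $\phi^{\,n}=\mathrm{id}$) and the class of $l(x)$ to the operator $L$. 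The statement now becomes purely algebraic: $L=\psi(l)$ is bijective if and only if the class of $l$ is a unit of $R:=\mathbb{F}_2[x]/(x^n-1)$, and in the principal ideal domain $\mathbb{F}_2[x]$ a class $l$ is a unit of $R$ exactly when $\gcd(l(x),x^n-1)=1$, via the B\'ezout identity $ul+v(x^n-1)=1$.

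The crux, and the step I expect to be the main obstacle, is proving that $\psi$ is injective, equivalently that the minimal polynomial of $\phi$ over $\mathbb{F}_2$ is exactly $x^n-1$ rather than a proper divisor of it; without this one would only get invertibility of $L$ modulo $\gcd$ with that minimal polynomial, which could be strictly smaller than $x^n-1$ and would break the stated criterion. I would settle this with the normal basis theorem: there is a $\theta\in\mathbb{F}_{2^n}$ whose conjugates $\theta,\phi\theta,\dots,\phi^{\,n-1}\theta$ form an $\mathbb{F}_2$-basis, so $\theta$ is a cyclic vector for $\phi$ and the minimal polynomial of $\phi$ has degree $n$; since it also divides $x^n-1$ (of degree $n$), it equals $x^n-1$, and $\psi$ is injective onto $\mathbb{F}_2[\phi]\cong R$. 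With $\psi$ injective, a class $l$ is a non–zero-divisor of the finite commutative ring $R$ iff $\psi(l)=L$ is injective, and in a finite commutative ring a non–zero-divisor is automatically a unit.

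Finally I would assemble the chain of equivalences: $L$ has a unique root $\iff\ker L=\{0\}\iff L$ is bijective $\iff l$ is a unit of $R\iff\gcd(l(x),x^n-1)=1$, which is the assertion. If one wants a sharper and more self-contained argument, the normal element $\theta$ also realizes $\mathbb{F}_{2^n}$ as the cyclic $R$-module $R\theta\cong R$; then $\ker l(\phi)$ is the annihilator ideal of $l$ in $R$, whose cardinality is $2^{\deg\gcd(l,\,x^n-1)}$, so it is trivial precisely when $\gcd(l(x),x^n-1)=1$. This recovers the lemma and, as a bonus, the exact number of roots of $L$.
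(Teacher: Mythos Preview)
The paper does not prove this lemma; it cites it as a standard fact from Lidl--Niederreiter, so there is no in-paper argument to compare against. Your proof is correct and is essentially the classical argument: identify the linearized polynomial $L$ with $l(\phi)$ where $\phi$ is Frobenius, observe that $\phi$ has minimal polynomial exactly $x^n-1$ (via the normal basis theorem), and conclude that $l(\phi)$ is invertible on $\mathbb{F}_{2^n}$ if and only if $l$ is a unit in $\mathbb{F}_2[x]/(x^n-1)$, i.e.\ $\gcd(l(x),x^n-1)=1$. The refinement $|\ker L|=2^{\deg\gcd(l,\,x^n-1)}$ you mention at the end is also the standard quantitative form of the result in Lidl--Niederreiter.
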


\begin{lemma}\label{lemgcd}
Let $p(x),~q(x)\in\mathbb{F}_2[x]$ and $s$ be any fixed positive
integer. Then $\gcd(p(x),q(x))=1$ if and only if
$\gcd(p(x^s),q(x^s))=1$.
\end{lemma}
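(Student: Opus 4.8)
The plan is to reduce both implications to Bézout's identity together with the observation that the substitution $x\mapsto x^s$ defines an injective ring endomorphism of $\mathbb{F}_2[x]$ (it multiplies degrees by $s$).

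For the forward direction, I would assume $\gcd(p(x),q(x))=1$ and invoke Bézout to obtain $a(x),b(x)\in\mathbb{F}_2[x]$ with $a(x)p(x)+b(x)q(x)=1$. Applying the substitution $x\mapsto x^s$, which preserves this polynomial identity, gives $a(x^s)p(x^s)+b(x^s)q(x^s)=1$, so any common divisor of $p(x^s)$ and $q(x^s)$ divides $1$; hence $\gcd(p(x^s),q(x^s))=1$.

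For the converse, I would work directly with a gcd. Let $d(x)=\gcd(p(x),q(x))$ and write $p(x)=d(x)p_1(x)$, $q(x)=d(x)q_1(x)$. Substituting $x\mapsto x^s$ yields $p(x^s)=d(x^s)p_1(x^s)$ and $q(x^s)=d(x^s)q_1(x^s)$, so $d(x^s)$ is a common divisor of $p(x^s)$ and $q(x^s)$ and therefore divides their gcd, which equals $1$ by hypothesis. Thus $d(x^s)$ is a unit in $\mathbb{F}_2[x]$, i.e. $\deg d(x^s)=0$; since $\deg d(x^s)=s\deg d(x)$, this forces $\deg d(x)=0$, so $d(x)=1$ and $p,q$ are coprime.

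The argument is entirely elementary and I do not anticipate any real obstacle; the only point that needs a moment's care is checking that divisibility and the degree count transfer correctly under $x\mapsto x^s$, which is immediate once one notes this map is an injective $\mathbb{F}_2$-algebra endomorphism. (One could also remark that nothing about $\mathbb{F}_2$ is used: the statement holds over any field.)
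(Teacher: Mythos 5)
Your proof is correct. The forward direction is exactly the paper's: B\'ezout gives $a(x)p(x)+b(x)q(x)=1$, and substituting $x\mapsto x^s$ preserves the identity. For the converse the paper argues by contradiction: it takes $d(x)=\gcd(p(x),q(x))$ with $\deg d>0$, writes $a(x)p(x)+b(x)q(x)=d(x)$, substitutes to get $a(x^s)p(x^s)+b(x^s)q(x^s)=d(x^s)$, and declares this to contradict $\gcd(p(x^s),q(x^s))=1$. As written that step is incomplete: a combination of $p(x^s)$ and $q(x^s)$ equal to $d(x^s)$ only shows that $\gcd(p(x^s),q(x^s))$ \emph{divides} $d(x^s)$, which is no contradiction at all; one still needs the observation that $d(x^s)$ is a \emph{common divisor} of $p(x^s)$ and $q(x^s)$. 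Your converse supplies precisely that missing ingredient: from $p(x)=d(x)p_1(x)$ you get $p(x^s)=d(x^s)p_1(x^s)$ and likewise for $q$, so $d(x^s)$ divides $\gcd(p(x^s),q(x^s))=1$, and the degree count $\deg d(x^s)=s\deg d$ forces $\deg d=0$. So your route through divisibility and degrees is not just a stylistic variant; it is the argument that actually closes the gap left by the paper's phrasing, and your remark that the statement holds over any field is accurate.
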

\begin{proof}
If $\gcd(p(x),q(x))=1$, there exist $a(x)$ and $b(x)$ over
$\mathbb{F}_2$ such that $a(x)p(x)+b(x)q(x)=1$ according to
B\'ezout's identity. Then we have $a(x^s)p(x^s)+b(x^s)q(x^s)=1$,
which implies $\gcd(p(x^s),q(x^s))=1$. On the contrary, assume
$\gcd(p(x^s),q(x^s))=1$. If $\gcd(p(x),q(x))=d(x)$ with $\deg d>0$,
there exist $a(x)$ and $b(x)$ over $\mathbb{F}_2$ such that
$a(x)p(x)+b(x)q(x)=d(x)$ and thus
$a(x^s)p(x^s)+b(x^s)q(x^s)=d(x^s)$, which contradicts the fact
$\gcd(p(x^s),q(x^s))=1$. \qedd
\end{proof}

\noindent \textit{Proof of Theorem \ref{mainthm}} ~~Obviously $Q(x)$
is a quaternary quadratic form. The bilinear form $B$ associated to
it satisfies
\begin{eqnarray*}
  2B(x,y) &=&Q(x\oplus y)-Q(x)-Q(y) \\
   &=&\Tr^n_1\left(\alpha
(x+y+2\sqrt{xy})+2\sum_{i=1}^{\lfloor\frac{m-1}{2}\rfloor}c_i\beta
(x+y+2\sqrt{xy})^{1+2^{eki}}\right)\\&&-\Tr^n_1\left(\alpha
x+2\sum_{i=1}^{\lfloor\frac{m-1}{2}\rfloor}c_i\beta
x^{1+2^{eki}}\right)-\Tr^n_1\left(\alpha
y+2\sum_{i=1}^{\lfloor\frac{m-1}{2}\rfloor}c_i\beta
y^{1+2^{eki}}\right)\\
   &=&2\tr_1^n\left(\overline{\alpha\sqrt{xy}}\right)+2\tr_1^n\left(\sum_{i=1}^{\lfloor\frac{m-1}{2}\rfloor}c_i
   \bar{\beta}(\bar{x}+\bar{y})^{1+2^{eki}}\right)\\
   &&-2\tr_1^n\left(\sum_{i=1}^{\lfloor\frac{m-1}{2}\rfloor}c_i
   \bar{\beta}\bar{x}^{1+2^{eki}}\right)-2\tr_1^n\left(\sum_{i=1}^{\lfloor\frac{m-1}{2}\rfloor}c_i
   \bar{\beta}\bar{y}^{1+2^{eki}}\right)\\
&=&2\tr_1^n\left(\bar{\alpha}^2\bar{x}\bar{y}\right)+2\tr_1^n\left(\sum_{i=1}^{\lfloor\frac{m-1}{2}\rfloor}c_i
   \bar{\beta}\left(\bar{x}\bar{y}^{2^{eki}}+\bar{x}^{2^{eki}}\bar{y}\right)\right)
\end{eqnarray*}
(here we distinguish $\bar{x}$ with $\mu(x)$ for any $x\in
\mathbb{T}$ for simplicity).  Since $\bar{\alpha}^2=\bar{\beta}$ and
for $1\leq i\leq \lfloor\frac{m-1}{2}\rfloor$,
\[  \tr_1^n\left(c_i\bar{\beta}\bar{x}\bar{y}^{2^{eki}}\right) =
  \tr_1^n\left((c_i\bar{\beta}\bar{x})^{2^{n-eki}}\bar{y}\right)
   =\tr_1^n\left(c_i\bar{\beta}\bar{x}^{2^{ek(m-i)}}\bar{y}\right)\]
as $\beta\in\mathbb{T}_e^*$, we have
\[2B(x,y)=2\tr_1^n\left(\bar{\beta}\bar{y}\left[\bar{x}+\sum_{i=1}^{\lfloor\frac{m-1}{2}\rfloor}c_i\left(
\bar{x}^{2^{eki}}+ \bar{x}^{2^{ek(m-i)}}\right) \right]\right).\]
Hence from Theorem \ref{gbentcrit} we know that $Q(x)$ is quaternary
bent if and only if the polynomial
\[L(x)=x+\sum_{i=1}^{\lfloor\frac{m-1}{2}\rfloor}c_i\left(
{x}^{2^{eki}}+ {x}^{2^{ek(m-i)}}\right) \] has only one root $0$ in
$\mathbb{F}_{2^n}$. By Lemma \ref{lemLP}, this is equivalent to
\[\gcd\left(1+\sum_{i=1}^{\lfloor\frac{m-1}{2}\rfloor}c_i\left(
{x}^{{eki}}+
{x}^{{ek(m-i)}}\right),~x^n-1\right)=\gcd\left(1+\sum_{i=1}^{\lfloor\frac{m-1}{2}\rfloor}c_i\left(
{x}^{{eki}}+ {x}^{{ek(m-i)}}\right),~x^{em}-1\right)=1,\] which is
further equivalent to
\[\gcd\left(1+\sum_{i=1}^{\lfloor\frac{m-1}{2}\rfloor}c_i\left(
{x}^{{ki}}+ {x}^{{k(m-i)}}\right),~x^{m}-1\right)=1\] by Lemma
\ref{lemgcd}. \qedd \medskip

We call a set of elements  $\mathcal{C}=\{c_i\in\mathbb{F}_2\mid
1\leq i\leq \lfloor\frac{m-1}{2}\rfloor\}$ a QBF-set w.r.t $(n,k)$
if the polynomial $c(x)$ in the form \eqref{cx} defined by it
satisfies $\gcd(c(x^k),x^m-1)=1$. By Theorem \ref{mainthm}, explicit
constructions of QBF-sets will promise new classes of  quaternary
bent functions
 in the form \eqref{wuconst}. In the remaining part of this section, we devote to finding several
 constructions of QBF-sets.

Firstly, from the constructions of quaternary bent functions
proposed in \cite{li}, we can directly obtain some classes of
QBF-sets. The results, whose detailed proofs can be found in
\cite[Corollary~1,2,3,4,5]{li}, are summarized in the following
proposition.

\begin{proposition}\label{egknown}
(1) Assume $m\geq7$ and let $\mathcal{C}=\{2,3\}$. Then
$\mathcal{C}$ is a QBF-set w.r.t. $(n,k)$ if and only if
$\gcd(m,3k)=\gcd(m,k)$;

(2) Let $t$ be a positive integer with $t<(m+1)/4$. Let
$\mathcal{C}=\{2i+1\mid 0\leq i\leq t\}$. Then $\mathcal{C}$ is a
QBF-set w.r.t. $(n,k)$ if and only if
$\gcd(m,(2t+3)k)=\gcd(m,(2t+1)k)=\gcd(m,k)$;

(3) Let $t$ be a positive integer with $t<(m+3)/4$. Let
$\mathcal{C}=\{2i\mid 1\leq i\leq t\}\cup\{1\}$. Then $\mathcal{C}$
is a QBF-set w.r.t. $(n,k)$ if and only if
$\gcd(m,(2t+3)k)=\gcd(m,(2t-1)k)=\gcd(m,k)$;

(4) Let $t$ be a positive integer with $t<m/2$. Let
$\mathcal{C}=\{i\mid 1\leq i\leq t\}$. Then $\mathcal{C}$ is a
QBF-set w.r.t. $(n,k)$ if and only if $\gcd(m,(2t+1)k)=\gcd(m,k)$;

(5) Assume $m\geq13$ and let $\mathcal{C}=\{2,5,6\}$. Then
$\mathcal{C}$ is a QBF-set w.r.t. $(n,k)$ if and only if
$\gcd(m,5k)=\gcd(m,k)$;

(6) Assume $m\geq11$ and let $\mathcal{C}=\{1,4,5\}$. Then
$\mathcal{C}$ is a QBF-set w.r.t. $(n,k)$ if and only if
$\gcd(m,3k)=\gcd(m,7k)=\gcd(m,k)$;

(7) Assume $m\geq13$ and let $\mathcal{C}=\{3,5,6\}$. Then
$\mathcal{C}$ is a QBF-set w.r.t. $(n,k)$ if and only if
$\gcd(m,3k)=\gcd(m,5k)=\gcd(m,7k)=\gcd(m,k)$.
\end{proposition}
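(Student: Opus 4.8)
\medskip\noindent\textit{Proof plan for Proposition \ref{egknown}.}
The first thing to notice is that, unwinding the definition, ``$\mathcal{C}$ is a QBF-set w.r.t. $(n,k)$'' is simply the single polynomial identity $\gcd(c(x^{k}),x^{m}-1)=1$ in $\mathbb{F}_{2}[x]$, with $c(x)$ given by \eqref{cx}; it depends only on $m$ and $k$, not on the factorisation $n=em$ nor on $\alpha,\beta$. Hence each of (1)--(7) is equivalent to the corresponding statement with $e=1$, i.e. with $n=m$, and for $e=1$, $\alpha=1$ our Theorem \ref{mainthm} is exactly \cite[Theorem~1]{li}. Thus the seven assertions are precisely the bentness criteria proved in \cite[Corollaries~1--5]{li} for the seven families of coefficient sets, and one may simply invoke those results. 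I would then only spell out the common computational mechanism behind them.

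That mechanism is as follows. Reducing modulo $x^{m}-1$ and clearing negative exponents, one finds in each of the seven cases a nonnegative integer $s$ such that $x^{s}c(x)$ is, up to multiplicities that turn out to be irrelevant, a product of factors of the form $\frac{x^{a}-1}{x-1}$ with $a$ ranging over an explicit set of \emph{odd} integers: namely $\{3\}$ in (1), $\{2t+1,\,2t+3\}$ in (2), $\{2t-1,\,2t+3\}$ in (3), $\{2t+1\}$ in (4), $\{5\}$ in (5), $\{3,\,7\}$ in (6) and $\{3,\,5,\,7\}$ in (7). For the parametrised families this is a telescoping of $1+\sum_{i}c_{i}(x^{i}+x^{-i})$ into such quotients, and for the sporadic sets it rests on identities over $\mathbb{F}_{2}$ such as $1+x+x^{3}+x^{5}+x^{6}=(x^{2}+x+1)^{3}$. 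Substituting $x\mapsto x^{k}$ gives $x^{sk}c(x^{k})\equiv\prod_{j}\frac{x^{a_{j}k}-1}{x^{k}-1}\pmod{x^{m}-1}$. Since every $a_{j}$ is odd, $\frac{x^{a_{j}k}-1}{x^{k}-1}$ is coprime to $x^{k}-1$ over $\mathbb{F}_{2}$, and together with $\gcd(x^{u}-1,x^{v}-1)=x^{\gcd(u,v)}-1$ this yields $\gcd\bigl(\frac{x^{a_{j}k}-1}{x^{k}-1},x^{m}-1\bigr)=\frac{x^{\gcd(a_{j}k,m)}-1}{x^{\gcd(k,m)}-1}$, which is trivial exactly when $\gcd(a_{j}k,m)=\gcd(k,m)$. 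Taking the product over $j$, and recalling that $c(1)=1$ so that no spurious common factor $x-1$ can appear, one gets $\gcd(c(x^{k}),x^{m}-1)=1$ if and only if $\gcd(a_{j}k,m)=\gcd(k,m)$ for every $j$, which is the stated condition in each case. Lemma \ref{lemgcd} is also convenient here for stripping common factors of $k$, and Lemma \ref{lemLP} is exactly what linked $Q$ to $c(x)$ in the proof of Theorem \ref{mainthm}.

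The real content — and the only obstacle — is the seven explicit factorisations of $x^{s}c(x)$ modulo $x^{m}-1$, together with the check that the exponents $a_{j}$ are odd. For (2)--(4) this requires a computation uniform in the parameter $t$, and the bounds $t<(m+1)/4$, $t<(m+3)/4$, $t<m/2$ there, as well as $m\ge7$ in (1) and $m\ge11,13$ in (5), (6), (7), are precisely what guarantees that all the indices involved lie in $\{1,\dots,\lfloor(m-1)/2\rfloor\}$, so that $c(x)$ genuinely has the asserted shape; the necessity directions then follow because any violated gcd condition produces an $m$-th root of unity annihilating $c(x^{k})$. As all of these factorisations are carried out in detail in \cite{li}, I would present only the reduction and the mechanism above, and refer to that paper for the routine case-by-case verification.
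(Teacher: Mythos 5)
Your proposal matches the paper exactly: the paper offers no independent proof of this proposition, simply noting that the detailed proofs are in \cite[Corollaries~1--5]{li}, which is also the core of your argument. Your additional observations --- that the QBF-set condition depends only on $(m,k)$ so the $e=1$ case of \cite{li} applies verbatim, and that each case reduces to a factorisation of $x^{s}c(x^{k})$ into quotients $\frac{x^{a_jk}-1}{x^{k}-1}$ with odd $a_j$ --- are correct and consistent with the mechanism the paper itself uses for Propositions~\ref{eg1}--\ref{eg9}.
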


In addition, we can construct some new classes of QBF-sets.

\begin{proposition}\label{eg1}
Let $t,~s$ be positive integers with
$s<t<\lfloor\frac{m-1}{2}\rfloor $. Let
$\mathcal{C}=\{t-s,s,t,t+s\}$. Then $\mathcal{C}$ is a QBF-set
w.r.t. $(n,k)$ if and only if $\gcd(m,3tk)=\gcd(m,tk)$ and
$\gcd(m,3sk)=\gcd(m,sk)$.
\end{proposition}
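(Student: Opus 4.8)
\noindent\textit{Proof proposal.}
The plan is to show that, modulo $x^{m}-1$ and after the substitution $x\mapsto x^{k}$, the polynomial $c(x)$ in \eqref{cx} determined by $\mathcal{C}=\{t-s,s,t,t+s\}$ splits into two copies of the polynomial attached to the singleton set $\{1\}$, and then to invoke the already-settled case $\mathcal{C}=\{1\}$, which is Proposition~\ref{egknown}(4) with the parameter there equal to $1$. Write $c_{\{1\}}(y)=1+y+y^{m-1}$, so $c_{\{1\}}(y)\equiv 1+y+y^{-1}\pmod{y^{m}-1}$; Proposition~\ref{egknown}(4) states precisely that, for every positive integer $j$,
\[\gcd\!\big(c_{\{1\}}(x^{j}),\,x^{m}-1\big)=1\iff\gcd(m,3j)=\gcd(m,j).\]
One could also prove this equivalence by hand, since $c_{\{1\}}(x^{j})$ agrees up to the unit $x^{j}$ with $x^{2j}+x^{j}+1=(x^{3j}-1)/(x^{j}-1)$, whose roots among the $m$-th roots of unity are those of order dividing $\gcd(m,3j)$ but not $\gcd(m,j)$, and such roots exist exactly when $\gcd(m,3j)\neq\gcd(m,j)$.

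The computational core is the identity in $\mathbb{F}_{2}[x,x^{-1}]$
\[\big(1+x^{A}+x^{-A}\big)\big(1+x^{B}+x^{-B}\big)=1+\sum_{j\in\{A-B,\,B,\,A,\,A+B\}}\big(x^{j}+x^{-j}\big),\]
which one checks by expanding the left side: the nine terms $1,\,x^{\pm A},\,x^{\pm B},\,x^{\pm(A+B)},\,x^{\pm(A-B)}$ regroup exactly into $1$ and the four displayed pairs. Now reduce $c(x^{k})$ modulo $x^{m}-1$: each summand $c_{i}\big(x^{ki}+x^{k(m-i)}\big)$ becomes $c_{i}\big(x^{ki}+x^{-ki}\big)$ because $x^{k(m-i)}=(x^{m})^{k}x^{-ki}\equiv x^{-ki}$, so — provided the four numbers $t-s,s,t,t+s$ are distinct elements of $\{1,\dots,\lfloor\frac{m-1}{2}\rfloor\}$, which is the intended reading of the hypothesis and rules out any cancellation — applying the identity with $A=kt$, $B=ks$ yields
\[c(x^{k})\equiv 1+\sum_{i\in\{t-s,s,t,t+s\}}\big(x^{ki}+x^{-ki}\big)\equiv c_{\{1\}}(x^{ks})\,c_{\{1\}}(x^{kt})\pmod{x^{m}-1}.\]

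It remains to combine things. Since $\gcd(f,x^{m}-1)$ depends only on $f\bmod(x^{m}-1)$ and $\mathbb{F}_{2}[x]$ is a unique factorization domain, $\gcd\big(c(x^{k}),x^{m}-1\big)=1$ holds if and only if both $\gcd\big(c_{\{1\}}(x^{ks}),x^{m}-1\big)=1$ and $\gcd\big(c_{\{1\}}(x^{kt}),x^{m}-1\big)=1$; by the equivalence recalled in the first paragraph, taken with $j=ks$ and with $j=kt$, this is exactly $\gcd(m,3sk)=\gcd(m,sk)$ together with $\gcd(m,3tk)=\gcd(m,tk)$, which is the stated criterion. The only step requiring real care is spotting and verifying the factorization identity, together with the bookkeeping on the exponents of $c$ that guarantees the sum over $\mathcal{C}$ is genuinely the untruncated sum over $\{t-s,s,t,t+s\}$; once that is in place, the remainder of the argument is purely formal.
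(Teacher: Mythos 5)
Your proof is correct and follows essentially the same route as the paper's: the key step in both is the factorization of $c(x^k)$ modulo $x^m-1$ into $\bigl(1+x^{kt}+x^{2kt}\bigr)\bigl(1+x^{ks}+x^{2ks}\bigr)$ up to a monomial unit, followed by the standard criterion $\gcd\bigl((x^{3j}-1)/(x^j-1),x^m-1\bigr)=1\iff\gcd(m,3j)=\gcd(m,j)$. The only difference is presentational — you phrase the factorization as a Laurent-polynomial identity, while the paper multiplies through by $x^{k(t+s)}$ before factoring — and your explicit flagging of the needed distinctness of $t-s,s,t,t+s$ and of $t+s\le\lfloor\frac{m-1}{2}\rfloor$ is a point the paper leaves implicit.
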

\begin{proof}
Since
\[c(x)=1+x^{t-s} +x^s+x^t+x^{t+s}+x^{m-(t-s)}+x^{m-s}+x^{m-t}+x^{m-(t+s)},\]
we have
\begin{eqnarray*}
  \gcd(c(x^k),x^m-1) &=&\gcd(1+x^{k(t-s)} +x^{ks}+x^{kt}+x^{k(t+s)}+x^{m-k(t-s)}+x^{m-ks}
  +x^{m-kt}+x^{m-k(t+s)},x^m-1)  \\
&=&\gcd(x^{k(t+s)}+x^{2kt}
+x^{kt+2ks}+x^{2kt+ks}+x^{2k(t+s)}+x^{2ks}+x^{k(t-s)}
  +x^{ks}+1,x^m-1)\\
   &=&\gcd((1+x^{kt}+x^{2kt})(1+x^{ks}+x^{2ks}),x^m-1)\\
   &=&\gcd\left(\frac{x^{3kt}-1}{x^{kt}-1}\cdot\frac{x^{3ks}-1}{x^{ks}-1},~x^m-1\right).
\end{eqnarray*}
Thus $\gcd(c(x^k),x^m-1)=1$ if and only if $\gcd(m,3tk)=\gcd(m,tk)$
and $\gcd(m,3sk)=\gcd(m,sk)$.\qedd
\end{proof}

\begin{proposition}\label{eg2}
Assume $m\geq11$ and let $\mathcal{C}=\{1,3,4,5\}$. Then
$\mathcal{C}$ is a QBF-set w.r.t. $(n,k)$ if and only if
$\gcd(m,3k)=\gcd(m,k)$.
\end{proposition}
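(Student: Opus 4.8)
The plan is to follow exactly the template of Proposition~\ref{eg1}: write down $c(x)$ explicitly for $\mathcal{C}=\{1,3,4,5\}$, then compute $\gcd(c(x^k),x^m-1)$ by algebraic manipulation of the polynomial $c(x^k)$ modulo $x^m-1$, hoping to factor it into a product of cyclotomic-type pieces of the form $(x^{ar}-1)/(x^r-1)$. Concretely, with $m\geq 11$ we have $\lfloor (m-1)/2\rfloor\geq 5$, so all four indices $1,3,4,5$ are legitimate, and
\[
c(x)=1+x+x^3+x^4+x^5+x^{m-1}+x^{m-3}+x^{m-4}+x^{m-5}.
\]
First I would replace $x$ by $x^k$ and reduce the negative-looking exponents $m-ki$ modulo $m$ inside the gcd with $x^m-1$; this turns $c(x^k)$ into a $9$-term polynomial in $x^k$ (up to multiplication by a unit power of $x$, which is invertible modulo $x^m-1$). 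The key step is then to recognize that this $9$-term expression factors. Over $\mathbb{F}_2$, the natural guess — given that the answer is $\gcd(m,3k)=\gcd(m,k)$, i.e. a single factor $(x^{3k}-1)/(x^k-1)=1+x^k+x^{2k}$ — is that
\[
c(x^k)\equiv \bigl(1+x^k+x^{2k}\bigr)\cdot g(x^k) \pmod{x^m-1}
\]
for some polynomial $g$ that is forced to be coprime to $x^m-1$ regardless of $m,k$ (e.g. $g$ a unit, or $g$ a product of factors like $1+x^k$ or $1+x^k+x^{3k}$ whose relevant gcd conditions are automatically implied). I would carry out this factorization by grouping the nine terms after multiplying by a suitable $x^{jk}$ to symmetrize — mirroring the trick $x^{k(t+s)}+\cdots+1=(1+x^{kt}+x^{2kt})(1+x^{ks}+x^{2ks})$ used in Proposition~\ref{eg1}.

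Having the factorization $c(x^k)\equiv u(x)\,(1+x^k+x^{2k})\,g(x^k)\pmod{x^m-1}$ with $u$ a unit, the gcd computation splits: $\gcd(1+x^k+x^{2k},x^m-1)=\gcd\!\left(\tfrac{x^{3k}-1}{x^k-1},x^m-1\right)=1$ exactly when $\gcd(m,3k)=\gcd(m,k)$ (this is the standard fact, used repeatedly in Proposition~\ref{egknown} and~\ref{eg1}, that $\gcd(x^a-1,x^m-1)=x^{\gcd(a,m)}-1$ over any field). For the leftover factor $g(x^k)$ I expect $\gcd(g(x^k),x^m-1)=1$ to hold unconditionally; if $g$ is itself of cyclotomic type, say $g(x)=1+x^b+x^{2b}$ or a product of $1+x$-type factors, the associated gcd condition will either be vacuous or a consequence of $\gcd(m,3k)=\gcd(m,k)$. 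I would verify this last implication directly from the divisibility relations among the exponents involved. Finally, combining: $\gcd(c(x^k),x^m-1)=1$ iff $\gcd(m,3k)=\gcd(m,k)$, using Lemma~\ref{lemgcd} only if it becomes convenient to strip a common factor from $k$.

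The main obstacle is the factorization step: unlike Proposition~\ref{eg1}, where the index set $\{t-s,s,t,t+s\}$ was engineered precisely so that the symmetrized polynomial telescopes into $(1+x^{kt}+x^{2kt})(1+x^{ks}+x^{2ks})$, the set $\{1,3,4,5\}$ does not have such an obvious additive structure, so I will have to experiment with which power $x^{jk}$ to multiply by and which grouping of the nine monomials exposes the factor $1+x^k+x^{2k}$. It is plausible that the correct manipulation is to notice $1+x+x^3+x^4+x^5=(1+x+x^2)(1+x^3)+x^3\cdot(\text{correction})$ or, working modulo $x^m-1$ after the reflection, that the full $9$-term polynomial equals $(1+x^k+x^{2k})(1+x^{3k})(1+x^{k}\cdot(\dots))$ type product; pinning down the exact identity and checking the residual factor contributes no new gcd constraint is where the real work lies. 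Everything else — reducing exponents mod $m$, the $x^a-1$ gcd formula, invoking Theorem~\ref{mainthm} — is routine.
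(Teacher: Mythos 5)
Your plan is the right one and is in fact exactly the paper's method, but as written it stops short of a proof: the entire content of this proposition is the explicit factorization that you defer to ``experimentation,'' and without it nothing is established. The identity you are looking for is that, after multiplying $c(x^k)$ by the unit $x^{5k}$ modulo $x^m-1$, the nine surviving monomials are exactly $1+y+y^2+y^4+y^5+y^6+y^8+y^9+y^{10}$ with $y=x^k$, and over $\mathbb{F}_2$ this is precisely
\[
(1+y+y^2)^5=(1+y+y^2)\,(1+y+y^2)^4=(1+y+y^2)(1+y^4+y^8),
\]
using the Frobenius identity $(1+y+y^2)^4=1+y^4+y^8$. Hence $\gcd(c(x^k),x^m-1)=\gcd\bigl(\bigl((x^{3k}-1)/(x^k-1)\bigr)^5,\,x^m-1\bigr)$, which equals $1$ if and only if $\gcd(m,3k)=\gcd(m,k)$; the ``residual factor'' you hope contributes no new constraint is simply four more copies of the same trinomial, so indeed it imposes nothing new. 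This single identity is the whole proof, and your proposal does not produce it.

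Two further points. Your speculation that the leftover factor might involve pieces of the form $1+x^k$ cannot be right: $\gcd(x^k-1,x^m-1)=x^{\gcd(k,m)}-1$ always has positive degree, so any such factor would force $\gcd(c(x^k),x^m-1)\neq 1$ for every $m$ and $k$, contradicting the claimed criterion; this is a sign that guessing the shape of $g$ is not a substitute for computing it. Also, the factorization above is an exact polynomial identity in $y$, not merely a congruence, so no extra appeal to Lemma~\ref{lemgcd} is needed beyond what Theorem~\ref{mainthm} already supplies. With the displayed identity inserted, your argument becomes the paper's proof verbatim.
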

\begin{proof}
Since
\[c(x)=1+x+x^3+x^4+x^5+x^{m-5}+x^{m-4}+x^{m-3}+x^{m-1},\]
we have
\begin{eqnarray*}
  \gcd(c(x^k),x^m-1)
  &=&\gcd(1+x^k+x^{3k}+x^{4k}+x^{5k}+x^{m-5k}+x^{m-4k}+x^{m-3k}+x^{m-k},x^m-1)\\
  &=&\gcd(x^{5k}+x^{6k}+x^{8k}+x^{9k}+x^{10k}+1+x^{k}+x^{2k}+x^{4k},x^m-1)\\
  &=&\gcd((1+x^k+x^{2k})^5,x^m-1)\\
  &=&\gcd\left(\left(\frac{x^{3k}-1}{x^k-1}\right)^5,~x^m-1\right).
\end{eqnarray*}
Thus $\gcd(c(x^k),x^m-1)=1$ if and only if
$\gcd(m,3k)=\gcd(m,k)$.\qedd
\end{proof}

\begin{proposition}\label{eg3}
Assume $m\geq13$ and let $\mathcal{C}=\{1,2,5,6\}$. Then
$\mathcal{C}$ is a QBF-set w.r.t. $(n,k)$ if and only if
$\gcd(m,3k)=\gcd(m,k)$.
\end{proposition}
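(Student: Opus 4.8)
The plan is to follow the pattern of Propositions~\ref{eg1} and~\ref{eg2}: write out $c(x)$, compute $\gcd(c(x^{k}),x^{m}-1)$ by reducing modulo $x^{m}-1$ after clearing by a monomial unit, factor the resulting polynomial, and read off the gcd conditions. First I would record that the hypothesis $m\ge 13$ is precisely what guarantees $6\le\lfloor\frac{m-1}{2}\rfloor$, so that $\mathcal{C}=\{1,2,5,6\}$ is an admissible index set and, by \eqref{cx},
\[
c(x)=1+x+x^{2}+x^{5}+x^{6}+x^{m-6}+x^{m-5}+x^{m-2}+x^{m-1}.
\]
Replacing $x$ by $x^{k}$ and then multiplying by the unit $x^{6k}$ modulo $x^{m}-1$ (which does not change the gcd with $x^{m}-1$), I would reduce to
\[
\gcd\bigl(c(x^{k}),x^{m}-1\bigr)=\gcd\bigl(1+x^{k}+x^{4k}+x^{5k}+x^{6k}+x^{7k}+x^{8k}+x^{11k}+x^{12k},\ x^{m}-1\bigr).
\]

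The heart of the argument is to recognize the factorization over $\mathbb{F}_{2}$
\[
1+y+y^{4}+y^{5}+y^{6}+y^{7}+y^{8}+y^{11}+y^{12}=(1+y+y^{2})^{2}\bigl(1+y+y^{2}+\cdots+y^{8}\bigr)=\Bigl(\frac{y^{3}-1}{y-1}\Bigr)^{2}\cdot\frac{y^{9}-1}{y-1},
\]
which I would verify by direct expansion (using $(1+y+y^{2})^{2}=1+y^{2}+y^{4}$ in characteristic $2$). Setting $y=x^{k}$ and arguing as in the proofs of Propositions~\ref{eg1} and~\ref{eg2} — a product is coprime to $x^{m}-1$ if and only if each factor is, while $\gcd\bigl(\frac{x^{3k}-1}{x^{k}-1},x^{m}-1\bigr)=1$ iff $\gcd(m,3k)=\gcd(m,k)$ and $\gcd\bigl(\frac{x^{9k}-1}{x^{k}-1},x^{m}-1\bigr)=1$ iff $\gcd(m,9k)=\gcd(m,k)$ — I would conclude that $\gcd(c(x^{k}),x^{m}-1)=1$ if and only if both $\gcd(m,3k)=\gcd(m,k)$ and $\gcd(m,9k)=\gcd(m,k)$ hold.

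It then remains to collapse these two conditions into the single one in the statement. Writing $g=\gcd(m,k)$, $m=gm_{1}$, $k=gk_{1}$ with $\gcd(m_{1},k_{1})=1$, one has $\gcd(m,3k)=g\,\gcd(m_{1},3)$ and $\gcd(m,9k)=g\,\gcd(m_{1},9)$; hence $\gcd(m,3k)=g$ forces $3\nmid m_{1}$, whence $\gcd(m,9k)=g$ automatically. Therefore $\gcd(c(x^{k}),x^{m}-1)=1$ if and only if $\gcd(m,3k)=\gcd(m,k)$, and the conclusion follows from Theorem~\ref{mainthm}. I expect the only genuinely non-routine step to be spotting the factorization of the degree-$12$ polynomial; once it is available, everything else is mechanical and parallels the preceding propositions. (Equivalently one may use the form $(1+y+y^{2})^{3}(1+y^{3}+y^{6})$, which routes the computation through $\frac{x^{3k}-1}{x^{k}-1}$ and $1+x^{3k}+x^{6k}$, but the final merging of the two gcd conditions is required either way.)
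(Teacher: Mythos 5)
Your proof is correct and follows essentially the same route as the paper's: reduce $c(x^k)$ modulo $x^m-1$ by multiplying by $x^{6k}$, factor the resulting polynomial as a product of cyclotomic-quotient factors (your grouping $\bigl(\frac{y^3-1}{y-1}\bigr)^2\cdot\frac{y^9-1}{y-1}$ is the same factorization the paper writes as $\bigl(\frac{y^3-1}{y-1}\bigr)^3\cdot\frac{y^9-1}{y^3-1}$), and then observe that $\gcd(m,3k)=\gcd(m,k)$ already forces the condition coming from the $9k$ factor. The only difference is cosmetic: you spell out the collapse of the two gcd conditions, which the paper merely asserts.
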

\begin{proof}
Since
\[c(x)=1+x+x^2+x^5+x^6+x^{m-6}+x^{m-5}+x^{m-2}+x^{m-1},\]
we have
\begin{eqnarray*}
  \gcd(c(x^k),x^m-1)
  &=&\gcd(1+x^k+x^{2k}+x^{5k}+x^{6k}+x^{m-6k}+x^{m-5k}+x^{m-2k}+x^{m-k},x^m-1)\\
  &=&\gcd(x^{6k}+x^{7k}+x^{8k}+x^{11k}+x^{12k}+1+x^{k}+x^{4k}+x^{5k},x^m-1)\\
  &=&\gcd((1+x^{3k}+x^{6k})(1+x^k+x^{2k})^3,x^m-1)\\
  &=&\gcd\left(\frac{x^{9k}-1}{x^{3k}-1}\cdot\left(\frac{x^{3k}-1}{x^k-1}\right)^3,~x^m-1\right).
\end{eqnarray*}
Thus $\gcd(c(x^k),x^m-1)=1$ if and only if $\gcd(m,3k)=\gcd(m,k)$
and $\gcd(m,9k)=\gcd(m,3k)$. But $\gcd(m,3k)=\gcd(m,k)$ implies that
$\gcd(m,9k)=\gcd(m,3k)$, so this condition is enough.\qedd
\end{proof}

\begin{proposition}\label{eg4}
Assume $m\geq13$ and let $\mathcal{C}=\{2,3,6\}$. Then $\mathcal{C}$
is a QBF-set w.r.t. $(n,k)$ if and only if
$\gcd(m,3k)=\gcd(m,7k)=\gcd(m,k)$.
\end{proposition}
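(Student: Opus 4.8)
The plan is to mimic the structure of the proofs of Propositions~\ref{eg1}--\ref{eg3}: write out $c(x)$ explicitly from the definition \eqref{cx} with $\mathcal{C}=\{2,3,6\}$, substitute $x\mapsto x^k$, and then compute $\gcd(c(x^k),x^m-1)$ by manipulating the polynomial modulo $x^m-1$ until it factors into cyclotomic-type pieces of the form $(x^{dk}-1)/(x^{k}-1)$. Concretely, $c(x)=1+x^2+x^3+x^6+x^{m-6}+x^{m-3}+x^{m-2}$, so modulo $x^m-1$ we may multiply by a suitable power of $x$ (here $x^{6k}$) to clear the negative exponents and obtain a genuine polynomial $P(x)=1+x^{3k}+x^{4k}+x^{6k}+x^{8k}+x^{9k}+x^{12k}$ (up to re-indexing), whose $\gcd$ with $x^m-1$ equals that of $c(x^k)$ since $\gcd(x,x^m-1)=1$.

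Next I would identify the factorization of $P(x)$. The exponents $\{0,3,4,6,8,9,12\}$ (in units of $k$) strongly suggest a product of $1+x^k+x^{2k}=(x^{3k}-1)/(x^k-1)$ with something built from $\{0,3,6,\dots\}$; a natural guess, given the answer $\gcd(m,3k)=\gcd(m,7k)=\gcd(m,k)$, is that $P(x)=(1+x^k+x^{2k})\cdot(1+x^{3k}+x^{5k}+x^{6k})$ or a similar pairing, with the second factor further reducible so that one piece contributes the condition $\gcd(m,3k)=\gcd(m,k)$ and the other contributes $\gcd(m,7k)=\gcd(m,k)$ — for instance a factor whose numerator/denominator ratio is $(x^{7k}-1)/(x^k-1)$. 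I would carry out the polynomial division to confirm the exact factorization, then use the standard identity $\gcd\!\bigl((x^{a}-1)/(x^{b}-1),\,x^{m}-1\bigr)=1 \iff \gcd(m,a)=\gcd(m,b)$ (with $b\mid a$) on each factor, exactly as in the earlier propositions. Finally I would check, as in the proof of Proposition~\ref{eg3}, whether the resulting list of gcd-equalities collapses: here I expect $\gcd(m,3k)=\gcd(m,k)$ and $\gcd(m,7k)=\gcd(m,k)$ to be independent (neither implies the other), so both survive in the statement, and I should also verify that no extra condition like $\gcd(m,9k)=\gcd(m,3k)$ appears, or if it does, that it is implied by $\gcd(m,3k)=\gcd(m,k)$. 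The hypothesis $m\geq13$ is just to ensure $6\leq\lfloor(m-1)/2\rfloor$ so that $\mathcal{C}=\{2,3,6\}$ is a legitimate coefficient set with distinct indices.

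The main obstacle is pinning down the correct factorization of $P(x)$ over $\mathbb{F}_2$: the seven-term polynomial does not factor in an obvious symmetric way, and one must be careful that the factorization is into the \emph{right} cyclotomic-ratio building blocks so that the final gcd conditions match the claimed ones. I would approach this by working in $\mathbb{F}_2$ throughout (so $-1=1$ and squaring is a ring homomorphism, which is what makes $(1+x^k+x^{2k})^2=1+x^{2k}+x^{4k}$ and similar simplifications legal), and by trial division of $P(x)$ first by $1+x^k+x^{2k}$ and then by small candidates like $1+x^{7k}$ or $1+x^k+\cdots+x^{6k}$ on the quotient. Once the factorization $P(x)=\dfrac{x^{3k}-1}{x^{k}-1}\cdot\dfrac{x^{7k}-1}{x^{k}-1}$ (or whatever the computation yields) is established, the rest is the same routine gcd bookkeeping used in Propositions~\ref{eg1}--\ref{eg3}, and the proof closes with \qedd.
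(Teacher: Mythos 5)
Your overall strategy is exactly the paper's: clear denominators by multiplying $c(x^k)$ by $x^{6k}$ modulo $x^m-1$ to get the polynomial with exponent set $k\cdot\{0,3,4,6,8,9,12\}$, factor it into cyclotomic-ratio blocks, and read off the gcd conditions. However, the proposal stops short of the one step that actually constitutes the proof — producing the factorization — and both explicit candidates you float are wrong on degree grounds alone: $P(x)$ has degree $12k$, while $(1+x^k+x^{2k})(1+x^{3k}+x^{5k}+x^{6k})$ and $\frac{x^{3k}-1}{x^k-1}\cdot\frac{x^{7k}-1}{x^k-1}$ both have degree $8k$. The correct factorization, which the paper states, is
\[
1+x^{3k}+x^{4k}+x^{6k}+x^{8k}+x^{9k}+x^{12k}=(1+x^{k}+x^{2k})^3\,(1+x^k+x^{2k}+x^{3k}+x^{4k}+x^{5k}+x^{6k})=\left(\frac{x^{3k}-1}{x^{k}-1}\right)^{3}\cdot\frac{x^{7k}-1}{x^k-1},
\]
which one can verify using $(1+x^k+x^{2k})^3=1+x^k+x^{3k}+x^{5k}+x^{6k}$ over $\mathbb{F}_2$ and expanding the product. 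Note the cube: your worry about a spurious extra condition is unfounded here because $\gcd(f^3,g)=1$ if and only if $\gcd(f,g)=1$, so the cube contributes the same condition $\gcd(m,3k)=\gcd(m,k)$ as a single factor would, and together with the $\frac{x^{7k}-1}{x^k-1}$ factor one gets precisely $\gcd(m,3k)=\gcd(m,7k)=\gcd(m,k)$ as claimed. With that factorization supplied, the rest of your argument (the reduction via $\gcd(x^{ak}-1,x^m-1)=x^{\gcd(m,ak)}-1$ and the observation that $m\geq 13$ guarantees $6\leq\lfloor(m-1)/2\rfloor$) is correct and matches the paper; without it, the proof is incomplete.
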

\begin{proof}
Since
\[c(x)=1+x^2+x^3+x^6+x^{m-6}+x^{m-3}+x^{m-2},\]
we have
\begin{eqnarray*}
  \gcd(c(x^k),x^m-1)
  &=&\gcd(1+x^{2k}+x^{3k}+x^{6k}+x^{m-6k}+x^{m-3k}+x^{m-2k},x^m-1)\\
  &=&\gcd(x^{6k}+x^{8k}+x^{9k}+x^{12k}+1+x^{3k}+x^{4k},x^m-1)\\
  &=&\gcd((1+x^{k}+x^{2k})^3(1+x^k+x^{2k}+x^{3k}+x^{4k}+x^{5k}+x^{6k}),x^m-1)\\
  &=&\gcd\left(\left(\frac{x^{3k}-1}{x^{k}-1}\right)^3\cdot\frac{x^{7k}-1}{x^k-1},~x^m-1\right).
\end{eqnarray*}
Thus $\gcd(c(x^k),x^m-1)=1$ if and only if
$\gcd(m,3k)=\gcd(m,7k)=\gcd(m,k)$.\qedd
\end{proof}

\begin{proposition}\label{eg5}
Assume $m\geq15$ and let $\mathcal{C}=\{2,3,4,7\}$. Then
$\mathcal{C}$ is a QBF-set w.r.t. $(n,k)$ if and only if
$\gcd(m,3k)=\gcd(m,5k)=\gcd(m,k)$.
\end{proposition}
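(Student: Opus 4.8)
The plan is to imitate the proofs of Propositions~\ref{eg1}--\ref{eg4}: write out $c(x)$ explicitly from the definition \eqref{cx} using $\mathcal{C}=\{2,3,4,7\}$, substitute $x\mapsto x^k$, and manipulate the resulting polynomial modulo $x^m-1$ into a product of cyclotomic-type factors of the form $(x^{dk}-1)/(x^{k}-1)$. Concretely, from $\mathcal{C}=\{2,3,4,7\}$ we have
\[
c(x)=1+x^{2}+x^{3}+x^{4}+x^{7}+x^{m-7}+x^{m-4}+x^{m-3}+x^{m-2}.
\]
After the substitution $x\mapsto x^{k}$, multiply by a suitable power $x^{jk}$ (here $x^{7k}$, so that the negative exponents become $x^{0},x^{3k},x^{4k},x^{5k},x^{9k}$ and the positive ones become $x^{7k},x^{9k},x^{10k},x^{11k},x^{14k}$); since $\gcd(x^{jk},x^m-1)=1$ this does not change the gcd. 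I then expect the resulting $14$-term Laurent polynomial to collapse, modulo nothing yet but as a genuine polynomial identity over $\mathbb{F}_2$, into something like $(1+x^{k}+x^{2k})^{a}\cdot(1+x^{k}+\cdots+x^{(p-1)k})^{b}$, i.e.\ a product of $\bigl(\tfrac{x^{3k}-1}{x^{k}-1}\bigr)^{a}\bigl(\tfrac{x^{5k}-1}{x^{k}-1}\bigr)^{b}$ for appropriate small $a,b$; the pattern of the other propositions ($\{2,3,4,7\}$ has the ``$2,3,4$'' block analogous to earlier examples plus a $7$) strongly suggests the $5k$-factor enters through the $7$ and the index spread.

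Once the factorization $c(x^{k})\equiv \bigl(\tfrac{x^{3k}-1}{x^{k}-1}\bigr)^{a}\bigl(\tfrac{x^{5k}-1}{x^{k}-1}\bigr)^{b}\pmod{x^m-1}$ (up to a unit) is established, I would finish exactly as in the earlier propositions: $\gcd(c(x^{k}),x^{m}-1)=1$ holds iff each factor is coprime to $x^{m}-1$, i.e.\ iff $\gcd\!\bigl(\tfrac{x^{3k}-1}{x^{k}-1},x^{m}-1\bigr)=1$ and $\gcd\!\bigl(\tfrac{x^{5k}-1}{x^{k}-1},x^{m}-1\bigr)=1$. Using that $\gcd(x^{dk}-1,x^{m}-1)=x^{\gcd(dk,m)}-1$ over $\mathbb{F}_{2}$, the first is equivalent to $\gcd(m,3k)=\gcd(m,k)$ and the second to $\gcd(m,5k)=\gcd(m,k)$, which together give the claimed condition $\gcd(m,3k)=\gcd(m,5k)=\gcd(m,k)$. (If the exponent $a$ or $b$ turns out to be a multiple giving a higher gcd condition, one checks, as in Proposition~\ref{eg3}, that it is automatically implied and hence redundant.) The hypothesis $m\geq 15$ is exactly what guarantees $7<\lfloor\frac{m-1}{2}\rfloor$, so that all indices $2,3,4,7$ are legitimate members of $\mathcal{C}$ and the exponents $m-7,\dots,m-2$ are distinct from $1,\dots,7$.

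The main obstacle is purely computational: correctly carrying out the $\mathbb{F}_{2}$-polynomial algebra to recognize the $14$-term expression as the asserted product. There is no conceptual difficulty---Theorem~\ref{mainthm} has already reduced bentness to the gcd condition and Lemma~\ref{lemgcd} is not even needed here---but one must be careful with the bookkeeping of exponents after multiplying by $x^{7k}$ and with cancellation in characteristic $2$ (terms appearing an even number of times vanish). I would verify the identity by expanding the candidate product $(1+x^{k}+x^{2k})^{a}(1+x^{k}+x^{2k}+x^{3k}+x^{4k})^{b}$ for the small $a,b$ suggested by term-counting (the product must have, after reduction, exactly $9$ nonzero terms to match $x^{7k}\cdot c(x^{k})$ before collapsing, but collapses to fewer; matching the extreme exponents $0$ and $14k$ pins down $a$ and $b$), and then confirm term-by-term agreement over $\mathbb{F}_{2}$.
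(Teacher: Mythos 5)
Your strategy is exactly the paper's: write out $c(x)$ from \eqref{cx}, multiply $c(x^k)$ by the unit $x^{7k}$ modulo $x^m-1$, factor the result into factors of the form $(x^{dk}-1)/(x^k-1)$, and translate coprimality of each factor with $x^m-1$ into gcd conditions on $m$ and $k$. The surrounding logic is all sound (multiplication by a power of $x$ preserves the gcd, multiplicities of the factors are irrelevant for coprimality, $\gcd(x^{dk}-1,x^m-1)=x^{\gcd(dk,m)}-1$, and indeed Lemma~\ref{lemgcd} is not needed here). However, the decisive step---the actual factorization, with specific exponents $a$ and $b$---is only conjectured, not established, and that identity is the entire computational content of this proposition. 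As written, the proof is incomplete: ``I expect the polynomial to collapse into something like\dots'' and ``I would verify the identity by\dots'' leave the one nontrivial claim unproved. There are also bookkeeping slips: after multiplying by $x^{7k}$ there are only four wrapped terms, which become $1,x^{3k},x^{4k},x^{5k}$ (the term $x^{9k}$ comes from $x^{2k}\cdot x^{7k}$, not from a wrapped term), and the resulting polynomial has $9$ terms, not $14$. Finally, $m\geq 15$ corresponds to $7\leq\lfloor\frac{m-1}{2}\rfloor$, not to the strict inequality you wrote.

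To close the gap: the nine surviving exponents are $0,3k,4k,5k,7k,9k,10k,11k,14k$, and the required identity over $\mathbb{F}_2$ is
\[
1+x^{3k}+x^{4k}+x^{5k}+x^{7k}+x^{9k}+x^{10k}+x^{11k}+x^{14k}=\left(1+x^{k}+x^{2k}\right)\left(1+x^{k}+x^{2k}+x^{3k}+x^{4k}\right)^{3},
\]
i.e.\ $a=1$ and $b=3$, consistent with your degree count $2a+4b=14$. It is checked directly: $(1+x+x^2+x^3+x^4)^3=1+x+x^4+x^6+x^8+x^{11}+x^{12}$, and multiplying by $1+x+x^2$ yields $1+x^3+x^4+x^5+x^7+x^9+x^{10}+x^{11}+x^{14}$. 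With this identity in place, the remainder of your argument goes through verbatim and gives $\gcd(m,3k)=\gcd(m,5k)=\gcd(m,k)$, exactly as in the paper.
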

\begin{proof}
Since
\[c(x)=1+x^2+x^3+x^4+x^7+x^{m-7}+x^{m-4}+x^{m-3}+x^{m-2},\]
we have
\begin{eqnarray*}
  \gcd(c(x^k),x^m-1)
  &=&\gcd(1+x^{2k}+x^{3k}+x^{4k}+x^{7k}+x^{m-7k}+x^{m-4k}+x^{m-3k}+x^{m-2k},x^m-1)\\
  &=&\gcd(x^{7k}+x^{9k}+x^{10k}+x^{11k}+x^{14k}+1+x^{3k}+x^{4k}+x^{5k},x^m-1)\\
  &=&\gcd((1+x^{k}+x^{2k})(1+x^k+x^{2k}+x^{3k}+x^{4k})^3,x^m-1)\\
  &=&\gcd\left(\frac{x^{3k}-1}{x^{k}-1}\cdot\left(\frac{x^{5k}-1}{x^k-1}\right)^3,~x^m-1\right).
\end{eqnarray*}
Thus $\gcd(c(x^k),x^m-1)=1$ if and only if
$\gcd(m,3k)=\gcd(m,5k)=\gcd(m,k)$.\qedd
\end{proof}

\begin{proposition}\label{eg6}
Assume $m\geq15$ and let $\mathcal{C}=\{1,2,3,7\}$. Then
$\mathcal{C}$ is a QBF-set w.r.t. $(n,k)$ if and only if
$\gcd(m,3k)=\gcd(m,5k)=\gcd(m,7k)=\gcd(m,k)$.
\end{proposition}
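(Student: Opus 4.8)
The plan is to proceed exactly as in Propositions \ref{eg1}--\ref{eg5}: translate the QBF-set condition of Theorem \ref{mainthm} into a gcd computation in $\mathbb{F}_2[x]$, reduce modulo $x^m-1$, and factor the resulting polynomial into ``cyclotomic'' blocks $\frac{x^{dk}-1}{x^k-1}=1+x^k+\cdots+x^{(d-1)k}$. First I would write out \eqref{cx} for $\mathcal{C}=\{1,2,3,7\}$, namely
\[c(x)=1+x+x^2+x^3+x^7+x^{m-7}+x^{m-3}+x^{m-2}+x^{m-1},\]
substitute $x\mapsto x^k$, and multiply $c(x^k)$ by the unit $x^{7k}$ modulo $x^m-1$ (legitimate since $\gcd(x,x^m-1)=1$), obtaining
\[\gcd\!\left(c(x^k),\,x^m-1\right)=\gcd\!\left(1+x^{4k}+x^{5k}+x^{6k}+x^{7k}+x^{8k}+x^{9k}+x^{10k}+x^{14k},\,x^m-1\right).\]

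The crux of the argument is the factorisation over $\mathbb{F}_2$
\[1+y^4+y^5+y^6+y^7+y^8+y^9+y^{10}+y^{14}=(1+y+y^2)^2(1+y+y^2+y^3+y^4)(1+y+y^2+y^3+y^4+y^5+y^6)\]
with $y=x^k$. I expect this to be the only step requiring real care: the middle run of exponents $y^5,\dots,y^{10}$ arises from several overlapping geometric progressions whose cancellations over $\mathbb{F}_2$ have to be verified by a direct (if slightly tedious) multiplication, with the degree bookkeeping $2\cdot 2+4+6=14$ and evaluation at the cube, fifth and seventh roots of unity serving as a guide and as a consistency check. Rewriting the three blocks as $\left(\frac{x^{3k}-1}{x^k-1}\right)^2$, $\frac{x^{5k}-1}{x^k-1}$ and $\frac{x^{7k}-1}{x^k-1}$, the computation above becomes
\[\gcd\!\left(c(x^k),\,x^m-1\right)=\gcd\!\left(\left(\frac{x^{3k}-1}{x^k-1}\right)^{2}\cdot\frac{x^{5k}-1}{x^k-1}\cdot\frac{x^{7k}-1}{x^k-1},\;x^m-1\right).\]

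Finally, invoking the same elementary fact already used in the preceding proofs — that $\gcd\!\left(\frac{x^{dk}-1}{x^k-1},x^m-1\right)=1$ if and only if $\gcd(m,dk)=\gcd(m,k)$ — for $d=3,5,7$, I would conclude that $\gcd(c(x^k),x^m-1)=1$ precisely when $\gcd(m,3k)=\gcd(m,5k)=\gcd(m,7k)=\gcd(m,k)$. By Theorem \ref{mainthm} this is exactly the asserted characterisation of $\mathcal{C}=\{1,2,3,7\}$ being a QBF-set w.r.t. $(n,k)$; the hypothesis $m\geq15$ is used only to guarantee $7\leq\lfloor\frac{m-1}{2}\rfloor$, so that $\{1,2,3,7\}$ is an admissible index set in \eqref{cx}.
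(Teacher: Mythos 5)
Your proposal is correct and follows essentially the same route as the paper: the same shift of $c(x^k)$ by $x^{7k}$ modulo $x^m-1$, the same factorisation $(1+y+y^2)^2(1+y+y^2+y^3+y^4)(1+y+y^2+y^3+y^4+y^5+y^6)$ of $1+y^4+y^5+\cdots+y^{10}+y^{14}$, and the same reduction of each block $\frac{x^{dk}-1}{x^k-1}$ to the condition $\gcd(m,dk)=\gcd(m,k)$ for $d=3,5,7$. The factorisation checks out by direct expansion over $\mathbb{F}_2$, so nothing is missing.
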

\begin{proof}
Since
\[c(x)=1+x+x^2+x^3+x^7+x^{m-7}+x^{m-3}+x^{m-2}+x^{m-1},\]
we have
\begin{eqnarray*}
  \gcd(c(x^k),x^m-1)
  &=&\gcd(1+x^{k}+x^{2k}+x^{3k}+x^{7k}+x^{m-7k}+x^{m-3k}+x^{m-2k}+x^{m-k},x^m-1)\\
  &=&\gcd(x^{7k}+x^{8k}+x^{9k}+x^{10k}+x^{14k}+1+x^{4k}+x^{5k}+x^{6k},x^m-1)\\
  &=&\gcd((1+x^{k}+x^{2k})^2(1+x^k+x^{2k}+x^{3k}+x^{4k})(1+x^k+x^{2k}+x^{3k}+x^{4k}+x^{5k}+x^{6k}),x^m-1)\\
  &=&\gcd\left(\left(\frac{x^{3k}-1}{x^{k}-1}\right)^2\cdot\frac{x^{5k}-1}{x^k-1}\cdot\frac{x^{7k}-1}{x^k-1},~x^m-1\right).
\end{eqnarray*}
Thus $\gcd(c(x^k),x^m-1)=1$ if and only if
$\gcd(m,3k)=\gcd(m,5k)=\gcd(m,7k)=\gcd(m,k)$.\qedd
\end{proof}

\begin{proposition}\label{eg7}
Assume $m\geq17$ and let $\mathcal{C}=\{1,3,5,6,8\}$. Then
$\mathcal{C}$ is a QBF-set w.r.t. $(n,k)$ if and only if
$\gcd(m,3k)=\gcd(m,7k)=\gcd(m,k)$.
\end{proposition}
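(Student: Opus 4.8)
The plan is to follow verbatim the pattern of Propositions \ref{eg1}--\ref{eg6}: read off $c(x)$ from $\mathcal{C}$, substitute $x\mapsto x^{k}$, multiply through by a suitable power of $x$ so that no negative exponents survive modulo $x^{m}-1$, factor the resulting polynomial over $\mathbb{F}_{2}$, and then translate a trivial-gcd condition into divisibility conditions on $m$.

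First, since $\mathcal{C}=\{1,3,5,6,8\}$ forces $8\le\lfloor\frac{m-1}{2}\rfloor$ — which is exactly the hypothesis $m\ge 17$ — we have
$$c(x)=1+x+x^{3}+x^{5}+x^{6}+x^{8}+x^{m-8}+x^{m-6}+x^{m-5}+x^{m-3}+x^{m-1}.$$
Replacing $x$ by $x^{k}$ and multiplying by $x^{8k}$, which is coprime to $x^{m}-1$ and hence a unit modulo $x^{m}-1$, gives
$$x^{8k}c(x^{k})\equiv 1+x^{2k}+x^{3k}+x^{5k}+x^{7k}+x^{8k}+x^{9k}+x^{11k}+x^{13k}+x^{14k}+x^{16k}\pmod{x^{m}-1}.$$

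The one step that is not pure bookkeeping is the factorization over $\mathbb{F}_{2}$
$$1+x^{2k}+x^{3k}+x^{5k}+x^{7k}+x^{8k}+x^{9k}+x^{11k}+x^{13k}+x^{14k}+x^{16k}=\bigl(1+x^{k}+x^{2k}\bigr)^{5}\bigl(1+x^{k}+x^{2k}+x^{3k}+x^{4k}+x^{5k}+x^{6k}\bigr).$$
I would prove this by expanding the right-hand side: writing $y=x^{k}$ and using $(1+y+y^{2})^{4}=1+y^{4}+y^{8}$ in characteristic $2$, one gets $(1+y+y^{2})^{5}=1+y+y^{2}+y^{4}+y^{5}+y^{6}+y^{8}+y^{9}+y^{10}$, and the product of this with $1+y+\cdots+y^{6}$ is a short convolution of the two supports that one checks coefficient by coefficient. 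With the identity in hand, exactly as in the earlier proofs $1+x^{k}+x^{2k}=\dfrac{x^{3k}-1}{x^{k}-1}$ and $1+x^{k}+\cdots+x^{6k}=\dfrac{x^{7k}-1}{x^{k}-1}$, so
$$\gcd\bigl(c(x^{k}),x^{m}-1\bigr)=\gcd\!\left(\left(\frac{x^{3k}-1}{x^{k}-1}\right)^{5}\cdot\frac{x^{7k}-1}{x^{k}-1},\;x^{m}-1\right).$$

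To conclude, I would use the same elementary fact invoked throughout this section: for a fixed $d$, $\gcd\bigl((x^{dk}-1)/(x^{k}-1),\,x^{m}-1\bigr)=1$ if and only if $\gcd(m,dk)=\gcd(m,k)$, together with the observation that raising a polynomial to a positive power does not change whether its gcd with $x^{m}-1$ is $1$. Applying this with $d=3$ and $d=7$ shows that the displayed gcd equals $1$ precisely when $\gcd(m,3k)=\gcd(m,k)$ and $\gcd(m,7k)=\gcd(m,k)$, which is the stated condition. The only place where anything can go wrong is the polynomial factorization identity above, so that is where I would concentrate the care; everything else is structurally identical to the preceding propositions.
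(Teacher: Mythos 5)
Your proof is correct and follows exactly the paper's own route: shift $c(x^k)$ by $x^{8k}$, factor as $(1+x^k+x^{2k})^5(1+x^k+\cdots+x^{6k})=\bigl(\tfrac{x^{3k}-1}{x^k-1}\bigr)^5\cdot\tfrac{x^{7k}-1}{x^k-1}$, and translate to the gcd conditions. In fact your intermediate polynomial (with the term $x^{7k}$ coming from $x^{m-k}\cdot x^{8k}$) is the correct one --- the paper's displayed line has a typo ($x^{6k}$ where $x^{7k}$ should appear), and your version is the one consistent with both the shift and the stated factorization.
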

\begin{proof}
Since
\[c(x)=1+x+x^3+x^5+x^6+x^8+x^{m-8}+x^{m-6}+x^{m-5}+x^{m-3}+x^{m-1},\]
we have
\begin{eqnarray*}
  \gcd(c(x^k),x^m-1)
  &=&\gcd(1+x^{k}+x^{3k}+x^{5k}+x^{6k}+x^{8k}+x^{m-8k}+x^{m-6k}+x^{m-5k}+x^{m-3k}+x^{m-k},x^m-1)\\
  &=&\gcd(x^{8k}+x^{9k}+x^{11k}+x^{13k}+x^{14k}+x^{16k}+1+x^{2k}+x^{3k}+x^{5k}+x^{6k},x^m-1)\\
  &=&\gcd((1+x^{k}+x^{2k})^5(1+x^k+x^{2k}+x^{3k}+x^{4k}+x^{5k}+x^{6k}),x^m-1)\\
  &=&\gcd\left(\left(\frac{x^{3k}-1}{x^{k}-1}\right)^5\cdot\frac{x^{7k}-1}{x^k-1},~x^m-1\right).
\end{eqnarray*}
Thus $\gcd(c(x^k),x^m-1)=1$ if and only if
$\gcd(m,3k)=\gcd(m,7k)=\gcd(m,k)$.\qedd
\end{proof}

\begin{proposition}\label{eg8}
Let $t$ be a positive integer with $t<(m-3)/4$. Let
$\mathcal{C}=\{2i+1\mid 1\leq i\leq t\}\cup\{2t+2\}$. Then
$\mathcal{C}$ is a QBF-set w.r.t. $(n,k)$ if and only if
$\gcd(m,(2t+3)k)=\gcd(m,(2t+1)k)=\gcd(m,3k)=\gcd(m,k)$.
\end{proposition}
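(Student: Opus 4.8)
\noindent\textit{Proof idea.}\ The plan is to follow the template of Propositions~\ref{eg1}--\ref{eg7}: reduce $\gcd(c(x^k),x^m-1)$ to the gcd of an explicit product of factors of the shape $\frac{x^{dk}-1}{x^k-1}$ with $x^m-1$, and then read off the condition. First I would note that, under the hypothesis $t<(m-3)/4$, the set $\mathcal{C}$ lies inside $\{1,\dots,\lfloor\frac{m-1}{2}\rfloor\}$, so that \eqref{cx} reads
\[
c(x)=1+\sum_{i=1}^{t}\bigl(x^{2i+1}+x^{m-(2i+1)}\bigr)+x^{2t+2}+x^{m-(2t+2)}.
\]
Since $\gcd(x^{(2t+2)k},x^m-1)=1$, we have $\gcd(c(x^k),x^m-1)=\gcd\bigl(x^{(2t+2)k}c(x^k),x^m-1\bigr)$. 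Multiplying through and using $x^{mk}\equiv1\pmod{x^m-1}$ — so that $x^{(2t+2)k}x^{(m-(2i+1))k}\equiv x^{(2t+1-2i)k}$ for $1\le i\le t$ and $x^{(2t+2)k}x^{(m-(2t+2))k}\equiv1$ — I would obtain
\[
x^{(2t+2)k}c(x^k)\equiv P(x):=1+\sum_{i=1}^{t}x^{(2i-1)k}+x^{(2t+2)k}+\sum_{i=1}^{t}x^{(2t+3+2i)k}+x^{(4t+4)k}\pmod{x^m-1}.
\]

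The core of the argument is the polynomial identity
\[
P(x)=\frac{x^{3k}-1}{x^k-1}\cdot\frac{x^{(2t+1)k}-1}{x^k-1}\cdot\frac{x^{(2t+3)k}-1}{x^k-1}\qquad\text{in }\mathbb{F}_2[x].
\]
To prove it I would put $u=x^k$ and set $S=u+u^3+\cdots+u^{2t-1}$, so that $\sum_{i=1}^{t}u^{2t+3+2i}=u^{2t+4}S$ and hence $P=(1+u^{2t+2}+u^{4t+4})+(1+u^{2t+4})S$. Since $\mathbb{F}_2[u]$ is an integral domain and $(u-1)^3$ divides $(u^3-1)(u^{2t+1}-1)(u^{2t+3}-1)$, it suffices to check
\[
(u-1)^3P(u)=(u^3-1)(u^{2t+1}-1)(u^{2t+3}-1).
\]
Over $\mathbb{F}_2$ one has $S=u(u^t-1)^2/(u-1)^2$, whence $(u-1)^3S=u(u-1)(u^{2t}-1)=u+u^2+u^{2t+1}+u^{2t+2}$; combining this with $(u-1)^3=1+u+u^2+u^3$ and expanding $(u-1)^3(1+u^{2t+2}+u^{4t+4})$ monomial by monomial, the two contributions to $(u-1)^3P(u)$ add, after cancellation over $\mathbb{F}_2$, to $1+u^3+u^{2t+1}+u^{2t+3}+u^{2t+4}+u^{2t+6}+u^{4t+4}+u^{4t+7}$, which is exactly the expansion of $(u^3-1)(u^{2t+1}-1)(u^{2t+3}-1)$.

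Combining the two displays, $\gcd(c(x^k),x^m-1)=1$ if and only if each of the three factors in the product above is coprime to $x^m-1$, i.e.\ if and only if $\gcd(m,3k)=\gcd(m,(2t+1)k)=\gcd(m,(2t+3)k)=\gcd(m,k)$, exactly as in the preceding propositions.

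The step I expect to be the main obstacle is the polynomial identity for $P(x)$: unlike Propositions~\ref{eg2}--\ref{eg7}, where the factorization can be spotted by expanding one concrete polynomial, here $t$ is a free parameter, so the bookkeeping must be organized (via the closed form of $S$) so as to be uniform in $t$. One should also treat the boundary value $t=1$ with care, since there $2t+1=3$ and several of the monomials above collide; the identity then results from extra cancellations and the factorization reads $\bigl(\frac{x^{3k}-1}{x^k-1}\bigr)^2\cdot\frac{x^{5k}-1}{x^k-1}$, with the condition simplifying to $\gcd(m,3k)=\gcd(m,5k)=\gcd(m,k)$.
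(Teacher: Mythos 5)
Your proposal is correct and follows essentially the same route as the paper: both reduce $\gcd(c(x^k),x^m-1)$ by multiplying through by $x^{(2t+2)k}$ and then factor the resulting polynomial as $\frac{x^{3k}-1}{x^k-1}\cdot\frac{x^{(2t+1)k}-1}{x^k-1}\cdot\frac{x^{(2t+3)k}-1}{x^k-1}$. The only difference is that you explicitly verify this factorization (via multiplication by $(u-1)^3$), which the paper simply asserts; your computation checks out, including the degenerate case $t=1$.
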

\begin{proof}
Since
\[c(x)=1+\sum_{i=1}^{t}(x^{2i+1}+x^{m-(2i+1)})+x^{2t+2}+x^{m-(2t+2)},\]
we have
\begin{eqnarray*}
  \gcd(c(x^k),x^m-1)
  &=&\gcd\left(1+\sum_{i=1}^{t}(x^{(2i+1)k}+x^{m-(2i+1)k})+x^{(2t+2)k}+x^{m-(2t+2)k},x^m-1\right)\\
  &=&\gcd\left(x^{(2t+2)k}+x^{(2t+2)k}\sum_{i=1}^{t}x^{(2i+1)k}+\sum_{i=1}^{t}x^{(2i-1)k}+x^{(4t+4)k}+1,x^m-1\right)\\
  &=&\gcd\left((1+x^k+x^{2k})\left(1+\sum_{i=1}^{2t}x^{ik}\right)\left(\sum_{i=1}^{2t+2}x^{ik}\right),x^m-1\right)\\
  &=&\gcd\left(\frac{x^{3k}-1}{x^{k}-1}\cdot\frac{x^{(2t+1)k}-1}{x^{k}-1}\cdot\frac{x^{(2t+3)k}-1}{x^{k}-1},~x^m-1\right).
\end{eqnarray*}
Thus $\gcd(c(x^k),x^m-1)=1$ if and only if
$\gcd(m,(2t+3)k)=\gcd(m,(2t+1)k)=\gcd(m,3k)=\gcd(m,k)$.\qedd
\end{proof}

\begin{proposition}\label{eg9}
Let $t$ be a positive integer with $t<(m-1)/4$. Let
$\mathcal{C}=\{2i\mid 2\leq i\leq t\}\cup\{1,2t+1\}$. Then
$\mathcal{C}$ is a QBF-set w.r.t. $(n,k)$ if and only if
$\gcd(m,(2t+3)k)=\gcd(m,(2t-1)k)=\gcd(m,3k)=\gcd(m,k)$.
\end{proposition}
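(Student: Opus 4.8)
The plan is to follow the same route as in Propositions~\ref{eg1}--\ref{eg8}: write $c(x)$ out explicitly, substitute $x\mapsto x^{k}$, multiply by a power of $x$ (a unit modulo $x^{m}-1$, hence harmless inside a gcd with $x^{m}-1$), reduce exponents modulo $m$ via $x^{m}\equiv 1$, and recognise the outcome as a product of ``repunit'' polynomials $\dfrac{x^{ak}-1}{x^{k}-1}$. For $\mathcal{C}=\{2i\mid 2\le i\le t\}\cup\{1,2t+1\}$ this gives
\[
c(x)=1+x+x^{m-1}+\sum_{i=2}^{t}\bigl(x^{2i}+x^{m-2i}\bigr)+x^{2t+1}+x^{m-(2t+1)}.
\]
Multiplying $c(x^{k})$ by $x^{(2t+1)k}$ and reducing modulo $x^{m}-1$, the surviving monomials are the $x^{jk}$ with $j$ running through: $0$; the odd integers $1,3,\dots,2t-3$; then $2t,\ 2t+1,\ 2t+2$; then the odd integers $2t+5,2t+7,\dots,4t+1$; and finally $4t+2$. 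The bound $t<(m-1)/4$ ensures that $\mathcal{C}$ is a legitimate index set (i.e.\ $2t+1\le\lfloor\frac{m-1}{2}\rfloor$), that $c(x)$ has exactly the displayed terms, and that none of the exponents $jk$ above degenerate under the reduction.

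The heart of the matter is then the polynomial identity over $\mathbb{F}_{2}$, with $z=x^{k}$,
\[
1+\sum_{\substack{1\le j\le 2t-3\\ j\text{ odd}}}z^{j}+z^{2t}+z^{2t+1}+z^{2t+2}+\sum_{\substack{2t+5\le j\le 4t+1\\ j\text{ odd}}}z^{j}+z^{4t+2}=\bigl(1+z+z^{2}\bigr)\left(\sum_{j=0}^{2t-2}z^{j}\right)\left(\sum_{j=0}^{2t+2}z^{j}\right),
\]
whose right-hand side is $\dfrac{x^{3k}-1}{x^{k}-1}\cdot\dfrac{x^{(2t-1)k}-1}{x^{k}-1}\cdot\dfrac{x^{(2t+3)k}-1}{x^{k}-1}$ on restoring $z=x^{k}$. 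I would prove this identity by expanding the triple product in two stages. First, the coefficient of $z^{j}$ in $(1+z+z^{2})\sum_{j'=0}^{2t-2}z^{j'}$ equals the parity of $\bigl|\{0,1,2\}\cap\{j-2t+2,\dots,j\}\bigr|$, which telescopes (for $t\ge 2$) to $1+\sum_{j=2}^{2t-2}z^{j}+z^{2t}$. Second, multiplying that by $\sum_{j'=0}^{2t+2}z^{j'}$, the coefficient of $z^{j}$ becomes the parity of the number of support exponents of the previous polynomial lying in $\{j-2t-2,\dots,j\}$; a term-by-term check then shows the survivors are exactly those on the left-hand side above. (For $t=1$ the middle factor degenerates to $1$; for $t=2$ one gets $\mathcal{C}=\{1,4,5\}$ and the identity reads $1+z+z^{4}+z^{5}+z^{6}+z^{9}+z^{10}=(1+z+z^{2})^{2}(1+z+\cdots+z^{6})$, recovering Proposition~\ref{egknown}(6) as a consistency check.) This two-stage $\mathbb{F}_{2}$-expansion, in particular keeping track of the two arithmetic progressions of odd exponents and their mutual cancellations, is the only genuinely computational step, and the bookkeeping is the part I expect to be the main obstacle to a tidy write-up.

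Finally, granting the identity, the conclusion follows from the machinery already used throughout this section: $\gcd\!\bigl(\tfrac{x^{ak}-1}{x^{k}-1},\,x^{m}-1\bigr)=1$ if and only if $\gcd(m,ak)=\gcd(m,k)$, and a product of polynomials is coprime to $x^{m}-1$ exactly when each of its factors is. Applying this with $a=3$, $a=2t-1$ and $a=2t+3$ shows that $\mathcal{C}$ is a QBF-set w.r.t.\ $(n,k)$ if and only if $\gcd(m,3k)=\gcd(m,(2t-1)k)=\gcd(m,(2t+3)k)=\gcd(m,k)$, which is precisely the stated condition.
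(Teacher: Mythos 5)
Your proposal is correct and follows essentially the same route as the paper's proof: both multiply $c(x^k)$ by the unit $x^{(2t+1)k}$ modulo $x^m-1$ and factor the result as $\frac{x^{3k}-1}{x^k-1}\cdot\frac{x^{(2t-1)k}-1}{x^k-1}\cdot\frac{x^{(2t+3)k}-1}{x^k-1}$ before applying the standard gcd criterion. Your two-stage expansion correctly establishes the factorization identity (which the paper merely asserts), and your degenerate-case and $t=2$ consistency checks are sound.
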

\begin{proof}
Since
\[c(x)=1+x+\sum_{i=2}^{t}(x^{2i}+x^{m-2i})+x^{2t+1}+x^{m-(2t+1)}+x^{m-1},\]
we have
\begin{eqnarray*}
  \gcd(c(x^k),x^m-1)
  &=&\gcd\left(1+x^k+\sum_{i=2}^{t}(x^{2ik}+x^{m-2ik})+x^{(2t+1)k}+x^{m-(2t+1)k}+x^{m-k},x^m-1\right)\\
  &=&\gcd\left(x^{(2t+1)k}+x^{(2t+2)k}+x^{(2t+1)k}\sum_{i=2}^{t}x^{2ik}+\sum_{i=1}^{t-1}x^{(2i-1)k}+x^{(4t+2)k}+1+x^{2tk},x^m-1\right)\\
  &=&\gcd\left((1+x^k+x^{2k})\left(1+\sum_{i=1}^{2t-2}x^{ik}\right)\left(\sum_{i=1}^{2t+2}x^{ik}\right),x^m-1\right)\\
  &=&\gcd\left(\frac{x^{3k}-1}{x^{k}-1}\cdot\frac{x^{(2t-1)k}-1}{x^{k}-1}\cdot\frac{x^{(2t+3)k}-1}{x^{k}-1},~x^m-1\right).
\end{eqnarray*}
Thus $\gcd(c(x^k),x^m-1)=1$ if and only if
$\gcd(m,(2t+3)k)=\gcd(m,(2t-1)k)=\gcd(m,3k)=\gcd(m,k)$.\qedd
\end{proof}

By virtue of the QBF-sets characterized in Proposition 1 to 10, we
can obtain many new classes of quaternary bent functions.

\section{New classes of quadratic bent and semi-bent function in polynomial forms}\label{secbent}

According to the connections between quaternary bent functions and
binary bent and semi-bent functions indicated in Theorem
\ref{connection}, we can derive some new classes of binary bent and
semi-bent functions based on the classes of quaternary bent
functions constructed in Section \ref{secZ4bent}.

Firstly, we recall a result concerning 2-adic expansion of the trace
function over $GR(4,n)$.

\begin{lemma}[\cite{hammous}]\label{lemtrace}
\[\Tr_1^n(x)=\tr_1^n(\bar{x})+2p(\bar{x}),\]where $p(x)$ is defined by
\eqref{eqpx}.
\end{lemma}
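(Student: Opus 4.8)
The plan is to reduce the statement to a computation purely inside $GR(4,n)$ and to track the $2$-adic ``carries'' that arise when an ordinary sum of Teichm\"uller elements is re-expressed through the field addition ``$\oplus$''. For $x\in\mathbb T$ the $2$-adic form is $x=x+2\cdot 0$, so the trace formula of Section~\ref{secpre} gives $\Tr_1^n(x)=\sum_{i=0}^{n-1}x^{2^i}$, a sum of $n$ elements of $\mathbb T$. The main tool will be the carry identity
\begin{equation*}
\sum_{j=1}^{k}t_j=\Big(\bigoplus_{j=1}^{k}t_j\Big)+2\sum_{1\le i<j\le k}(t_it_j)^{2^{n-1}},\qquad t_1,\dots,t_k\in\mathbb T,
\end{equation*}
which I would prove by induction on $k$. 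The case $k=2$ is $a+b=(a\oplus b)+2\sqrt{ab}$, immediate from the definition $a\oplus b=a+b+2\sqrt{ab}$ together with $-2\equiv2\pmod 4$; the inductive step uses the elementary fact that $2u=2v$ in $GR(4,n)$ exactly when $\mu(u)=\mu(v)$, which lets one replace $2\sqrt{(t_1\oplus\cdots\oplus t_{k-1})t_k}$ by $\sum_{i=1}^{k-1}2\sqrt{t_it_k}$, both sides having the same image under the field isomorphism $\mu\colon\mathbb T\to\mathbb F_{2^n}$.

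Specialising the carry identity to $t_i=x^{2^i}$ for $0\le i\le n-1$ yields
\begin{equation*}
\Tr_1^n(x)=\Big(\bigoplus_{i=0}^{n-1}x^{2^i}\Big)+2\sum_{0\le i<j\le n-1}\big(x^{2^i+2^j}\big)^{2^{n-1}}.
\end{equation*}
The first summand lies in $\mathbb T$ and has $\mu$-image $\sum_{i=0}^{n-1}\bar x^{2^i}=\tr_1^n(\bar x)\in\{0,1\}$, hence it equals $\tr_1^n(\bar x)$ as an element of $\mathbb Z_4$. Setting $S:=\sum_{0\le i<j\le n-1}\bar x^{2^i+2^j}\in\mathbb F_{2^n}$, one has $\mu\big(\sum_{i<j}(x^{2^i+2^j})^{2^{n-1}}\big)=S^{2^{n-1}}$, so the second summand equals $2w$ for any $w\in GR(4,n)$ with $\mu(w)=S^{2^{n-1}}$. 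It therefore suffices to show $S=p(\bar x)$: since then $S\in\mathbb F_2=\{0,1\}$, so $S^{2^{n-1}}=S$, and choosing $w=S\in\{0,1\}\subseteq\mathbb Z_4$ gives $\Tr_1^n(x)=\tr_1^n(\bar x)+2p(\bar x)$.

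To evaluate $S=\sum_{\{a,b\}}\bar x^{2^a+2^b}$, where the sum runs over the $\binom n2$ unordered pairs of residues modulo $n$, I would group the pairs by their cyclic distance $d=\min(|a-b|,\,n-|a-b|)\in\{1,\dots,\lfloor n/2\rfloor\}$. For each $d$ with $2d\ne n$, the pairs of distance $d$ are exactly $\{i,i+d\}$, $i\in\mathbb Z/n$, and they are all distinct, so their combined contribution is $\sum_{i=0}^{n-1}\bar x^{2^i+2^{i+d}}=\sum_{i=0}^{n-1}(\bar x^{1+2^d})^{2^i}=\tr_1^n(\bar x^{1+2^d})$. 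When $n$ is even and $d=n/2$, the pairs $\{i,i+n/2\}$ recur with period $n/2$, so there are only $n/2$ of them (those with $i=0,\dots,n/2-1$); moreover $\bar x^{1+2^{n/2}}$ is fixed by the $2^{n/2}$-power map and hence lies in $\mathbb F_{2^{n/2}}$, so their contribution is $\sum_{i=0}^{n/2-1}(\bar x^{1+2^{n/2}})^{2^i}=\tr_1^{n/2}(\bar x^{1+2^{n/2}})$. Summing over $d$ gives $S=\sum_{d=1}^{(n-1)/2}\tr_1^n(\bar x^{1+2^d})$ when $n$ is odd and $S=\sum_{d=1}^{n/2-1}\tr_1^n(\bar x^{1+2^d})+\tr_1^{n/2}(\bar x^{1+2^{n/2}})$ when $n$ is even, i.e. $S=p(\bar x)$ in both cases.

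The carry identity and the reduction $2u=2v\iff\mu(u)=\mu(v)$ are routine. The step I expect to be the main obstacle is the combinatorial bookkeeping in the last paragraph, in particular the ``diameter'' case $d=n/2$: here the degree-$n$ trace would count each pair twice and contribute $0$ in $\mathbb F_2$, and it is precisely this that forces the sub-field trace $\tr_1^{n/2}$ to appear in \eqref{eqpx}.
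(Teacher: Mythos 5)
Your argument is correct. Note, however, that the paper offers no proof of this lemma at all: it is imported verbatim from the reference \cite{hammous} (the $\mathbb{Z}_4$-linearity paper), so there is nothing internal to compare against. Your derivation is essentially the classical one from that source: write $\Tr_1^n(x)=\sum_{i=0}^{n-1}x^{2^i}$ as an ordinary sum of Teichm\"uller elements, convert it to a $\oplus$-sum plus twice the accumulated carries via the identity $\sum_j t_j=\bigl(\bigoplus_j t_j\bigr)+2\sum_{i<j}(t_it_j)^{2^{n-1}}$, identify the $\oplus$-sum with $\tr_1^n(\bar x)\in\{0,1\}$ through the field isomorphism $\mu$, and then organise the $\binom{n}{2}$ carry terms $\bar x^{2^i+2^j}$ by cyclic distance. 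All the delicate points are handled properly: the induction step correctly uses that $2u=2v$ iff $\mu(u)=\mu(v)$ together with the additivity of $u\mapsto u^{2^{n-1}}$ in characteristic $2$; and the exceptional class $d=n/2$ for even $n$ is counted only $n/2$ times, with $\bar x^{1+2^{n/2}}$ lying in $\mathbb{F}_{2^{n/2}}$, which is exactly what produces the subfield trace $\tr_1^{n/2}$ in \eqref{eqpx} rather than a vanishing full trace. So your proposal is a complete and self-contained proof of a statement the paper only cites.
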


From Lemma \ref{lemtrace}, we know that $\Tr_1^n(\lambda
x)=\tr_1^n(\bar{\lambda}\bar{x})+2p(\bar{\lambda}\bar{x})$ for any
$\lambda\in\mathbb{T}$.  Hence the quaternary Boolean function
defined by \eqref{wuconst} can be expressed as
\[Q(x)=\tr_1^n(\bar{\alpha}\bar{x})+2\left[p(\bar{\alpha}\bar{x})+\sum_{i=1}^{\lfloor\frac{m-1}{2}\rfloor}c_i\tr_1^n\left(\bar{\beta}
\bar{x}^{1+2^{eki}}\right)\right].\] From Theorem \ref{connection}
(2), we obtain the following result, which generalizes
\cite[Theorem~3]{li}.

\begin{theorem}\label{derivethm}
Assume $\alpha,~\beta\in\mathbb{F}_{2^e}^*$ and $\beta=\alpha^2$.
Let
\begin{equation}\label{wuconstderive}
f_Q(x)=p(\alpha{x})+\sum_{i=1}^{\lfloor\frac{m-1}{2}\rfloor}c_i\tr_1^n\left(\beta
{x}^{1+2^{eki}}\right),~x\in\mathbb{F}_{2^n},
\end{equation}
where  $c_i\in\mathbb{Z}_2$, $1\leq i\leq
\lfloor\frac{m-1}{2}\rfloor$. Then $f_Q(x)$ is  a bent function or a
semi-bent function, respectively, according as $n$ is even or odd,
respectively, if and only if $\gcd(c(x^k),x^m-1)=1$ where $c(x)$ is
defined by \eqref{cx}.
\end{theorem}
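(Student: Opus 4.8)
The plan is to read Theorem~\ref{derivethm} off Theorem~\ref{mainthm} and Theorem~\ref{connection}(2) through the $2$-adic decomposition of the quaternary quadratic form \eqref{wuconst} displayed just above the statement. First I would return from $\mathbb{F}_{2^e}^*$ to the Teichm\"uller world: since $\mu$ restricts to a bijection $\mathbb{T}_e^*\to\mathbb{F}_{2^e}^*$ that respects multiplication, a pair $\alpha,\beta\in\mathbb{F}_{2^e}^*$ with $\beta=\alpha^2$ lifts uniquely to $\tilde\alpha,\tilde\beta\in\mathbb{T}_e^*$ with $\tilde\beta=\tilde\alpha^2$. Let $Q$ be the quaternary function \eqref{wuconst} built from $\tilde\alpha,\tilde\beta$; it meets the hypotheses of Theorem~\ref{mainthm}. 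By Lemma~\ref{lemtrace} and the relations (1),(2) linking $\Tr_1^n$ with $\tr_1^n$, its $2$-adic expansion is $Q=Q_0+2Q_1$ with $Q_0(x)=\tr_1^n(\bar\alpha\bar x)$ and $Q_1(x)=p(\bar\alpha\bar x)+\sum_{i=1}^{\lfloor(m-1)/2\rfloor}c_i\tr_1^n(\bar\beta\bar x^{1+2^{eki}})$, precisely the formula displayed before the statement. Under the field isomorphism identifying $\mathbb{T}$ (with $\oplus$) and $\mathbb{F}_{2^n}$, $Q_1$ becomes the function $f_Q$ of \eqref{wuconstderive}, while $Q_0$ becomes the \emph{linear} form $x\mapsto\tr_1^n(\alpha x)$.

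The one point needing care is that the criterion in Theorem~\ref{connection}(2) asks that \emph{both} $f_1$ and $f_0+f_1$ be bent (resp.\ semi-bent), whereas we only want to control $f_1=f_Q$. Here I would invoke the invariance of bentness and semi-bentness under the addition of an affine function: if $\ell(x)=\tr_1^n(cx)$, then $\widehat{g+\ell}(a)=\hat g(a+c)$, so $g$ and $g+\ell$ have the same multiset of Walsh-coefficient magnitudes, hence $g+\ell$ is bent (resp.\ semi-bent) exactly when $g$ is. Since $f_0=Q_0$ is linear, $f_0+f_1=\tr_1^n(\alpha x)+f_Q$ is bent (resp.\ semi-bent) iff $f_Q$ is, so the two-part condition in Theorem~\ref{connection}(2) collapses to the single requirement that $f_Q$ be bent for $n$ even, resp.\ semi-bent for $n$ odd.

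It then remains to chain the equivalences. By Theorem~\ref{connection}(2) and the previous paragraph, $Q$ is quaternary bent $\iff$ $f_Q$ is bent (for $n$ even), resp.\ semi-bent (for $n$ odd); and by Theorem~\ref{mainthm}, $Q$ is quaternary bent $\iff$ $\gcd(c(x^k),x^m-1)=1$. Composing these yields exactly the asserted equivalence. The ``main obstacle'' is thus nothing substantial: it is the affine-invariance observation that lets one discard the $f_0+f_1$ half of the criterion of Theorem~\ref{connection}(2); the remaining ingredients — the lift from $\mathbb{F}_{2^e}^*$ to $\mathbb{T}_e^*$ and the $2$-adic decomposition — are already essentially provided by the text immediately preceding the theorem.
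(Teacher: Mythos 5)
Your proof is correct and follows essentially the same route as the paper, which simply asserts the result from the displayed $2$-adic expansion of $Q$ and Theorem~\ref{connection}(2) without further comment. You correctly supply the one detail the paper leaves implicit — that since $f_0(x)=\tr_1^n(\alpha x)$ is linear, the condition on $f_0+f_1$ in Theorem~\ref{connection}(2) is equivalent to the condition on $f_1$ alone by affine invariance of the Walsh spectrum magnitudes — and the rest (the lift to $\mathbb{T}_e^*$ and the chaining with Theorem~\ref{mainthm}) matches the paper exactly.
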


It is clear that the Boolean function $f_Q(x)$ in Theorem
\ref{derivethm} is a quadratic Boolean function in the so-called
polynomial form \cite{charpin}. In fact, constructions of quadratic
bent and semi-bent functions in polynomial forms are extensively
studied by several authors (see, for example,
\cite{khoo,charpin,yuny,huhg}). From Theorem \ref{derivethm}, we can
directly  obtain many new classes of quadratic bent and semi-bent
functions in polynomial forms by choosing the sets of coefficients
$\{c_i\}$ to be  QBF-sets characterized  by, say, Proposition
\ref{egknown} to \ref{eg9}. However, it is not a simple matter to
get them via studying quadratic forms over finite fields, which is
the standard approach to study quadratic binary Boolean functions.

\section{Conclusions}\label{secconclud}

In this paper, we propose a new construction of quaternary bent
functions from quadratic  forms over Galois rings of characteristic
4, which generalize some previous work. By characterizing the
so-called QBF-sets, we can explicitly construct several new classes
of quaternary bent functions. They furthermore derive several new
classes of binary bent and semi-bent functions which are quadratic
ones in polynomials forms.


\end{document}